\begin{document}

\author{Dan Frumin}
\email{d.frumin@rug.nl}
\orcid{0000-0001-5864-7278}
\affiliation{\institution{University of Groningen}
  \country{The Netherlands}
}

\author{Amin Timany}
\orcid{0000-0002-2237-851X}             \affiliation{            \institution{Aarhus University}
\country{Denmark}                    }
\email{timany@cs.au.dk}          

\author{Lars Birkedal}
\orcid{0000-0003-1320-0098}             \affiliation{
\institution{Aarhus University}            \country{Denmark}                    }
\email{birkedal@cs.au.dk}

\title{Modular Denotational Semantics for Effects with Guarded Interaction Trees}

\begin{abstract}
  We present \emph{guarded interaction trees} --- a structure and a fully formalized framework for representing higher-order computations with higher-order effects in Coq, inspired by domain theory and the recently proposed interaction trees.
  We also present an accompanying separation logic for reasoning about guarded interaction trees.
  To demonstrate that guarded interaction trees provide a convenient domain for interpreting higher-order languages with effects, we define an interpretation of a PCF-like language with effects and show that this interpretation is sound and computationally adequate; we prove the latter using a logical relation defined using the separation logic.
  Guarded interaction trees also allow us to combine different effects and reason about them modularly.
  To illustrate this point, we give a modular proof of type soundness of cross-language interactions for safe interoperability of different higher-order languages with different effects.
  All results in the paper are formalized in Coq using the Iris logic over guarded type theory.
\end{abstract}

\keywords{Coq, Iris, denotational semantics, logical relations}

\begin{CCSXML}
<ccs2012>
   <concept>
       <concept_id>10003752.10010124.10010131</concept_id>
       <concept_desc>Theory of computation~Program semantics</concept_desc>
       <concept_significance>500</concept_significance>
       </concept>
   <concept>
       <concept_id>10011007.10011006.10011072</concept_id>
       <concept_desc>Software and its engineering~Software libraries and repositories</concept_desc>
       <concept_significance>300</concept_significance>
       </concept>
   <concept>
       <concept_id>10003752.10003790.10002990</concept_id>
       <concept_desc>Theory of computation~Logic and verification</concept_desc>
       <concept_significance>300</concept_significance>
       </concept>
 </ccs2012>
\end{CCSXML}

\ccsdesc[500]{Theory of computation~Program semantics}
\ccsdesc[300]{Software and its engineering~Software libraries and repositories}
\ccsdesc[300]{Theory of computation~Logic and verification}

\maketitle

\section{Introduction}
\label{sec:introduction}
Interaction trees \cite{XiaEtAl:2019} are a recently proposed formalism for representing and reasoning about (possibly) non-terminating programs with side effects in Coq (a terminating type theory without effects).
Since its inception, interaction trees have been applied, including but not limited, to specifying and verifying network servers \cite{KohEtAl:2019,ZhangEtAl:2021}, semantics of LLVM \cite{ZakowskiEtAl:2021}, semantics of a language for robotics \cite{YeEtAl:2022}, non-interference \cite{SilverEtAl:2023}, and verification of concurrent objects with transactional memory \cite{LesaniEtAl:2022}.

The introduction of interaction trees was motivated by a desire to simplify mechanized formalizations of interactive, effectful, non-terminating computations and the developers of the ITrees library argued that ITrees can represent computations in a way which is more \emph{modular} than representations based on operational semantics and \emph{executable} (in contrast to earlier representations based on traces represented as predicates on events).
In particular, the idea is that interaction trees can be used to give \emph{denotational semantics} to programming languages and thus allow one to abstract away from syntactic details and reuse meta-language features such as function composition so as to obtain more robust mechanizations.
And, indeed, the applications mentioned above demonstrate that interaction trees work well for giving semantics to first-order programming languages with first-order effects.

The challenge we address in this paper is that interaction trees cannot easily be used as a model of \emph{higher-order} programming languages with \emph{higher-order effects}, which, of course, limits the applicability of interaction trees.
Indeed, the ease of use of interaction trees is enabled, in part, by two restrictions imposed on the computations represented by the interaction trees: the computations must be first-order, and the effects that the computation performs must be first-order as well.
With those restrictions, the type of interaction trees forms a monad, which allows one to compose the represented computations and reason about them modularly.
(In principle, one could represent higher-order computations by means of closures in interaction trees, but that would defeat the purpose of interaction trees and force one to reason about syntactic representations, which interaction trees otherwise relieves one from.)
To understand the limitations to first-order programs and first-order effects, we call to mind the definition of interaction trees.

Interaction trees are possibly infinite trees with two types of branching.
The first type of branching represents a ``delayed'' computation (similar to that of the delay monad), or a computation performing a silent step.
The second type of branching represents a computation that performs an effect; different results of the effect lead to different branches.
Interaction trees are formalized as coinductive types in Coq, allowing one to leverage existing infrastructure for coinductive programs and proofs:
\vspace{-0.6em}
\begin{lstlisting}[language=Coq, basicstyle=\small]
  CoInductive itree (E : Type -> Type) (R : Type) :=
  | Ret : R -> itree E R
  | Tau : itree E R -> itree E R
  | Vis {A : Type} : E A -> (A -> itree E R) -> itree E R
\end{lstlisting}
\vspace{-0.8em}
Now the point is that if we wish to represent higher-order computations, then we cannot simply add a constructor \coqe{Fun : (itree E R -> itree E R) -> itree E R}, as the resulting recursive type would have negative occurrences of the recursive variable (the \coqe{itree E R} on the left of the first arrow).
Similarly, if we want to support computations with higher-order effects, i.e.,
the result of an effect is an interaction tree itself, we run into the same problems with positivity.
For example, in the following potential signature for a higher-order effect, the parameter \coqe{test} occurs in a negative position:
\vspace{-0.6em}
\begin{lstlisting}[language=Coq,basicstyle=\small]
  Inductive test : Type -> Type :=
  | T : nat -> test (itree test unit).
\end{lstlisting}
\vspace{-1.1em}

\paragraph{Guarded interaction trees: Iris and guarded type theory}
Our \emph{goal} is to address the challenge of extending interaction trees to allow for higher-order computations and higher-order effects, in such a way that we retain some of the advantages of interaction trees; in particular we wish to obtain a representation with which we can work efficiently in Coq.
From the discussion above, it is clear that a way forward is to work in a setting that allows to solve mixed-variance recursive domain equations. There are several possible choices for such a setting, including classical Scott domain theory \cite{Scott:1976,SmythPlotkin:1982} and guarded type theory \cite{BirkedalEtAl:2010a,DBLP:journals/corr/abs-1208-3596}.
We choose to use the latter since this choice allows us to leverage the Iris program logic framework in Coq and thence obtain
an efficient environment in which we can work efficiently and formally in Coq.

Thus in this paper we introduce \emph{guarded interaction trees}, which are formally defined in guarded type theory as a solution to a guarded recursive domain equation, and we show how guarded interaction trees can be used to represent higher-order computations and higher-order effects.
Moreover, we demonstrate how we can retain some of the benefits of interaction trees, in particular modularity with respect to effects and ease of use in Coq.
The extension to higher-order computations and effects does come with a certain price, in that we need to reason about guardedness, but we believe the use of Iris alleviates this.

Our Coq formalization is available online at
\begin{center}
  \url{https://github.com/logsem/gitrees/tree/popl24}.
\end{center}
(tag \texttt{popl24} in the Git repository)

\paragraph{Contributions.}
In this paper we present the following contributions, all formalized as part of our extensible and adaptable Coq formalization:
\begin{enumerate}
\item We present \emph{guarded interaction trees}, describe the associated recursion principle, and demonstrate how to write combinators to program with guarded interaction trees (\Cref{sec:gitrees}).
\item We describe a way of \emph{reifying} effects in the guarded interaction trees, and the reduction semantics (\Cref{sec:reductions}).
\item We show how to give a model of a higher-order programming language with general recursion and effects in guarded interaction trees, and show that the model is sound (\Cref{sec:iolang_model}).
\item We build a separation logic (a program logic) on top of guarded interaction trees, allowing us to reason about their behavior (\Cref{sec:program_logic}).
\item We use the separation logic to show that the model that we construct in \Cref{sec:iolang_model} satisfies computational adequacy (\Cref{sec:adequacy}).
\item We demonstrate how multiple different effects can be combined in guarded interaction trees, and how the separation logic is used to reason about the effects locally (\Cref{sec:subeffects}).
\item Finally, we utilize the results above, and use guarded interaction trees to show type safety of cross-language interactions for safe interoperability of languages with different effects (\Cref{sec:interop}).
\end{enumerate}
We discuss related work in \Cref{sec:related_work}.
Before we present our results, we briefly go over some preliminaries about the setting that we are working in.

\section{Iris Logic over Guarded Type Theory}
\label{sec:guarded_logic}
In this section we describe the Iris logic, in which we shall define and work with guarded interaction trees.
Our treatment is brief since Iris has been described in many other papers and we are just using a small extension of
the usual presentation; we refer the reader to the literature on Iris \cite{DBLP:journals/jfp/JungKJBBD18}
and guarded type theory \cite{DBLP:journals/corr/abs-1208-3596} for more details. 

Iris is usually presented as a separation logic over a simple type theory.The model of Iris, however, models a richer type theory and in this paper we are going make use of that and consider Iris over a guarded type theory with
\begin{enumerate*}
\item a modicum of dependent type theory, and 
\item the ability to define guarded recursive types.
\end{enumerate*}
Both of these features are supported by the existing Coq implementation of Iris and the associated Iris proof mode \cite{DBLP:conf/popl/KrebbersTB17}.

Note that since we are working formally in Iris in Coq, there are two logical levels at play: the statements and proofs at the Coq level (which we refer to as the meta-logic level, or as the meta level), and the statements and proofs at the level of Iris (which we refer to as the logic level).

We recall the grammar of Iris in \Cref{fig:grammar}; the syntax consists of types, terms, and propositions.
Most of the grammar is standard for higher-order intuitionistic logic, with the parts related to guarded recursion highlighted in {\color{blue}blue}.
As usual in higher-order logic, we have a well-typedness judgment
$x_1 : \type_1, \dots, x_n : \type_n \proves \wtt{\term}{\type}$ stating that the term $\term$ has type $\type$, under the assumption that the variables $x_i$ have types $\type_i$.
In the grammar for types, ${\texttt I}$ ranges over so-called discrete types, which are meta-level types embedded into the types of Iris.
Note that types include dependent types over discrete types.  While we have not shown it in the grammar, we can also form types as solutions to guarded recursive domain equations, i.e., type equations where the recursive occurence of the type being defined is guarded under the $\latert$ type modality. Such types are defined up to isomorphism; we will see an example shortly: the type of guarded interaction trees will be such a recursive type and will be introduced in the following section.
A useful semantic intuition for the types of Iris is that they denote (certain kinds of) time-indexed sets, i.e.,
families of sets indexed over natural numbers. At time step $n>0$, the later type $\latert \type$ consists of the elements
of $\tau$ at a later time step, i.e., at $n-1$. At time step $n=0$, the type $\latert \type$ is a singleton set.
Intuitively, guarded recursive types exist because to understand what a guarded recusive type is at time step $n$,
one only needs to understand what it is at $n-1$, since the recursion is guarded. 

\begin{figure}[t]\small
  \begin{align*}
    \type \bnfdef{}&
     \Prop \mid
       0 \mid
      \Tunit \mid
      \Tbool \mid \Tnat \mid  
      \type + \type \mid
      \type \times \type \mid
      \type \to \type \mid {\color{blue}\latert \type} \mid
      I \mid               
      \Sigma_{\idx\in \mathtt{I}} \type_\idx \mid
      \Pi_{\idx\in \mathtt{I}} \type_\idx \mid
      \dots
    \\[0.4em]
    \term \bnfdef{}&
         \var \in \Var\mid
         \sigfn(\term_1, \dots, \term_n) \mid
         \textlog{abort}\; \term \mid
         () \mid
         (\term, \term) \mid
         \pi_i\; \term \mid
         \Lam \var:\type.\term \mid
         \term(\term)  \mid
    \\&
    \textlog{inj}_i\; \term \mid
    \textlog{match}\; \term \;\textlog{with}\; \Ret\textlog{inj}_1\; \var. \term \mid \Ret\textlog{inj}_2\; \var. \term \;\textlog{end} \mid
    {\color{blue}\Next(\term)} \mid
    {\color{blue}{\fix_\type}}
    \\[0.4em]
    \prop \bnfdef{}&
    \FALSE \mid
    \TRUE \mid
    \term =_\type \term \mid
\prop \land \prop \mid
    \prop \lor \prop \mid
    \prop \to \prop \mid
\Exists \var:\type. \prop \mid
    \All \var:\type. \prop \mid
    {\color{blue}\later \prop} \mid
    {\color{blue}\MU \var:\type. \prop}
  \end{align*}
  \caption{Grammar for the base logic.}
  \label{fig:grammar}
\end{figure}

Elements of $\latert \type$ can be constructed from elements of $\type$, using the $\Next$ constructor, and
we can form fixed points for guarded endo-functions:
\begin{mathpar}
  \infer{\vctx \proves \wtt{\term}{\type}}
  {\vctx \proves \wtt{\Next(\term)}{\latert \type}}
  \and
  \infer{}
  {\vctx \proves \wtt{\fix_\type}{(\latert \type \to \type) \to \type}}
\end{mathpar}
The $\latert$ type former is functorial and we write
$\wtt{\latert f}{\latert \type_1 \to \latert \type_2}$ for its action on terms $\wtt{f} {\type_1 \to \type_2}$.
The fixed point satisfies the equation $\fix_\type(f) = f(\Next(\fix_\type(f)))$.

For propositions $\wtt{\prop}{\Prop}$ we also have the provability judgment
$\vctx \mid \prop \proves \propB$ stating that $\propB$ is derivable from $\prop$ in the typing context $\vctx = x_1 :\type_1, \dots, x_n:\type_n$.
The rules corresponding to the intuitionistic fragment are standard.
Here we present the rules concerning the guarded part of the logic.

On the level of propositions, we have a(nother) \emph{later modality} $\later$.
This is the later modality most users of Iris are already familiar with.
It is related to the later modality on types in that $\later(t =_\type t') \provesIff \Next(t) =_{\latert\type} \Next(t')$.
We recall that $\later$ can be used to define guarded recursive predicates and and that it supports reasoning via L{\"o}b induction:
\begin{mathpar}
  \infer{\vctx\proves \wtt{P}{\Prop}}
  {\vctx\proves \wtt{\later P}{\Prop}}
  \and
  \infer{\vctx, x:\type \proves \wtt{P}{\Prop} \and \mbox{$x$ is guarded in $P$}}
  {\vctx \proves \wtt{\MU x:\type. P}{\Prop}}
  \and
  \infer
  {\vctx \mid \later P \proves P}
  {\vctx \mid \TRUE \proves P}
\end{mathpar}
Iris also includes separation logic connectives; we recall those later, when we need them, in \Cref{sec:program_logic}.

If $P$ is a proposition that consists only of intuitionistic logic connectives without $\later$, then we can interpret it both as a meta-level proposition (i.e. a Coq proposition), and as an Iris proposition.
For such propositions we have the following result, connecting Iris with the meta-level:
\begin{theorem}[Iris Adequacy]
  \label{thm:iris_adequacy}
Let $P$ be a proposition containing only intuitionistic connectives.
Then, if $\TRUE \proves P$ is derivable in Iris, then $P$ also holds at the meta-level.
\end{theorem}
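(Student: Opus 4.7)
The plan is to prove this by appealing to the standard semantic model of Iris in the topos of trees (equivalently, the model of guarded type theory from which Iris propositions are interpreted as monotone, downward-closed, step-indexed predicates). Two ingredients are needed: semantic soundness of the proof system and a ``purity'' lemma characterizing the shape of propositions built from intuitionistic connectives only.

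First, I would invoke (or set up) the semantic interpretation $\llbracket \cdot \rrbracket$ taking an Iris proposition $P$ in context $\vctx$ to a family of meta-level propositions $\llbracket P \rrbracket_n$ indexed by a time step $n \in \mathbb{N}$, together with the standard soundness theorem: if $\vctx \mid \TRUE \proves P$ is derivable, then for every valuation of $\vctx$ and every $n$, $\llbracket P \rrbracket_n$ holds in the meta-logic. This is the usual induction on derivations; I would not redo it here, but cite it as a property of the model used throughout Iris.

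Next, I would prove the key lemma: if $P$ uses only the intuitionistic connectives listed in the grammar (i.e., $\FALSE$, $\TRUE$, equality, $\land$, $\lor$, $\to$, $\exists$, $\forall$) and no occurrence of $\later$, $\mu$, nor any of the separation logic connectives, and if all quantifiers range over discrete types $I$, then the family $\llbracket P \rrbracket_n$ is constant in $n$, and moreover coincides with the obvious meta-level interpretation $\widehat{P}$ obtained by reading $P$ as a Coq proposition. The proof is by induction on the structure of $P$: the propositional connectives are interpreted pointwise, equality on a discrete type at step $n$ reduces to meta-level equality because discrete types embed meta-level sets, and the quantifier cases go through because quantifying over a discrete type in the semantics is quantifying over the underlying meta-level set. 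The main subtlety, and what I expect to be the main obstacle, is pinning down precisely what counts as an ``intuitionistic'' proposition: one has to make sure that all subterms and quantified types are themselves discrete, so that no step-indexed content can sneak in through terms of guarded recursive type or through $\latert$-types. Once this restriction is made explicit, the induction is routine.

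Finally, I would combine the two ingredients. By semantic soundness, $\TRUE \proves P$ implies $\llbracket P \rrbracket_n$ at every $n$; pick $n = 0$ for concreteness. By the purity lemma, $\llbracket P \rrbracket_0$ is equivalent to the meta-level reading $\widehat{P}$. Hence $\widehat{P}$ holds, which is what the theorem asserts. I would close by remarking that the restriction to intuitionistic connectives is essential: adding $\later$ would make the semantic interpretation genuinely step-indexed, so that truth at every $n$ no longer reduces to a single meta-level proposition.
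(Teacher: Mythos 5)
The paper does not actually prove this theorem: it is imported as a known property of Iris's model (the statement is the standard ``soundness/adequacy of the base logic'' from the Iris meta-theory, cf.\ \cite{DBLP:journals/jfp/JungKJBBD18}), so there is no in-paper argument to compare against. Your sketch is the standard proof of that imported result, and it is essentially correct: semantic soundness of the proof system in the step-indexed model, plus a purity lemma showing that the interpretation of a proposition built only from intuitionistic connectives is constant in the step index and coincides with its meta-level reading. You are also right to flag the real content of the hypothesis: ``containing only intuitionistic connectives'' must be read as excluding not just $\later$, $\mu$, and the separation-logic connectives but also any equality or quantification at non-discrete types (e.g.\ at $\latert\type$ or at $\IT$), since otherwise step-indexed content leaks in through the terms; the paper leaves this implicit. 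One small caution on the final step: in Iris's usual convention every proposition holds trivially at step index $0$, so ``pick $n=0$'' proves nothing on its own --- your argument is fine only because you apply the purity lemma first (constancy in $n$ plus truth at all $n$), but it would be cleaner to conclude from truth at all $n$, or at some $n>0$, rather than singling out the degenerate index.
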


\section{Guarded Interaction Trees}
\label{sec:gitrees}
The type of \emph{guarded interaction trees} (or \emph{\gitrees} for short) $\IT_{E}(A)$ is defined for a ground type $A$ and an effect signature $E$, as we explain below.
In \Cref{fig:gitrees_def} $\IT_{E}(A)$ is written down as a guarded datatype.\footnote{Formally, the datatype is given by a solution to a recursive equation, which we examine in \Cref{sec:recursion}.
But it is convenient to think of $\IT_E(A)$ as a recursive datatype in which every recursive occurrence is behind a $\latert$.}
\begin{figure}[t]
  \begin{align*}
    \mathsf{guarded\ type}\ \IT_E(A) &{}= \Rret : A \to \IT_E\\
                                     &\ \ALT \Fun : \latert (\IT_E(A) \to \IT_E(A)) \to \IT_E(A)\\
                                     &\ \ALT \Err : \Error \to \IT_E(A)\\
                                     &\ \ALT \Tau : \latert \IT_E(A) \to \IT_E(A) \\
                                     &\ \ALT \Vis : \prod_{\idx \in \mathtt{I}} \big( \Ins_{\idx}(\latert \IT_E(A)) \times (\Outs_{\idx}(\latert \IT_E(A)) \to \latert \IT_E(A))\big) \to \IT_E(A)
  \end{align*}
  
  \caption{Guarded datatype of interaction trees.}
  \label{fig:gitrees_def}
\end{figure}
The first constructor $\Rret$ says that any element $a$ of the ground type $A$ can be associated with a ``terminated'' guarded interaction tree $\Rret(a)$.
The second constructor $\Fun$ says that functions are also guarded interaction trees, and it is this constructor that allows us to model higher-order computations.
Since the function constructor contains a negative occurrence of $\IT_E(A)$ in its argument, we must put it under a $\latert$.
The third constructor $\Err$ represents an error state, or a stuck computation, which we take from some predefined set $\Error$ of errors.
We assume that it contains at least one element $\RunTime\in\Error$ representing a generic run-time error.
The fourth constructor $\Tau$ denotes a delayed computation, or a computation that is available ``later''.
We also write $\Tick : \IT_E(A) \to \IT_E(A)$ for the composition $\Tau \circ \Next$.
Then the term $\Tick(\alpha)$ represents a guarded interaction tree that takes a silent step to $\alpha$.
It satisfies the following rule for equality: 
$\Tick(\alpha) = \Tick(\beta) \provesIff \later (\alpha = \beta)$.

Finally, the last constructor $\Vis$ allows us to model effects.
The possible effects are given by the signature  $E = (\mathtt{I},\Ins_{-},\Outs_{-})$, where $\mathtt{I}$ is an indexing set on the meta-level (i.e. a set of operation names), and $\Ins$ and $\Outs$ are functors determining the arities of the operations.
That is $\Ins_{\idx}, \Outs_{\idx} : \Type \to \Type$ for $\idx \in \mathtt{I}$.
The $\Type$ argument in $\Ins_{\idx}$ and $\Outs_{\idx}$ is instantiated with $\IT_{E}$ itself, and is used for giving signatures to higher-order effects.
With this, the first argument to $\Vis_{\idx}$ is then the input for the operation, and the second parameter is a continuation which, given an arbitrary output of the operations, produces the remainder of the computation.
One way to visualize this is to think of $\Vis_{\idx}$ as a node in the tree, with the annotation $\Ins_{\idx}$ and having $\Outs_{\idx}$ many branches.

We refer to guarded interaction trees $\Rret(a)$ and $\Fun(f)$ as \gitree values, and write $\ITv_E(A) \subseteq \IT_E(A)$ for the set of values.
When quantifying over an indexing set, we implicitly coerce $E$ to $\mathtt{I}$, i.e. we write $\idx \in E$ to mean $\idx \in E.\mathtt{I}$.
Similarly we write $\Ins$ for $E.\Ins$ and $\Outs$ for $E.\Outs$ when the signature $E$ is clear from the context.
When the signature $E$ is obvious or unimportant we simply write $\IT(A)$ for $\IT_E(A)$ and $\ITv(A)$ for $\ITv_E(A)$.

Let us demonstrate the syntax of guarded interaction trees with some running examples of effects.
\begin{example}[Input/output on a tape]
  \label{ex:io_sig}
  Suppose we want to model two effectful operations, for reading a number from STDIN and for writing an output on STDOUT.
  We will model them as guarded interaction trees $\IT_{E_{io}}(\Tunit + \Tnat)$, where $\Tunit = \{()\}$ is the unit type and
  \begin{align*}
    E_{io} &\eqdef \{\mathtt{input}, \mathtt{output}\} &\\
    \Ins_{\mathtt{input}}(X) &\eqdef \Tunit & \Outs_{\mathtt{input}}(X) &\eqdef \Tnat\\
    \Ins_{\mathtt{output}}(X) &\eqdef \Tnat & \Outs_{\mathtt{ouput}}(X) &\eqdef \Tunit
  \end{align*}
  
  We write $\INPUT$ and $\OUTPUT(n)$ for the \gitrees
  \begin{align*}
    \INPUT \eqdef \Vis_{\mathtt{input}}((), \Lam n. \Next(\Rret(\inr(n)))) &&
 \OUTPUT(n) \eqdef \Vis_{\mathtt{output}}(n, \Lam x. \Next (\Rret(\inl(()))))
  \end{align*}
  Here we use $\inl(()) : \Tunit + \Tnat$ as a ``dummy'' value, since we do not care about the return value of $\OUTPUT$.
\end{example}
The operations $\INPUT$ and $\OUTPUT$ above are represented as \gitrees $\IT_{E_{io}}(\Tunit + \Tnat)$.
However, the exact ground type is not important, as long as it contains the unit $\Tunit$ and the natural numbers $\Tnat$.
As such, we assume that we can write down operations like $\INPUT$ and $\OUTPUT$ as \gitrees $\IT_{E_{{io}}}(A)$ where $A \simeq \Tunit + \Tnat + B$ for some type $B$.
We return again to this point in \Cref{sec:subeffects}, but for now we assume that we always pick a ground type $A$ that is ``large enough'' to represent all the ground values that we need.

\begin{example}[Higher-order store]
  \label{ex:store_sig}
  We can model higher-order store with the following signature.
  \begin{align*}
    E_{store} &\eqdef \{\mathtt{alloc}, \mathtt{read}, \mathtt{write}, \mathtt{dealloc}\} &\\
    \Ins_{\mathtt{alloc}}(X) &\eqdef X & \Outs_{\mathtt{alloc}}(X) &\eqdef \Loc\\
    \Ins_{\mathtt{read}}(X) &\eqdef \Loc & \Outs_{\mathtt{read}}(X) &\eqdef X\\
    \Ins_{\mathtt{write}}(X) &\eqdef \Loc \times X & \Outs_{\mathtt{write}}(X) &\eqdef \Tunit\\
    \Ins_{\mathtt{dealloc}}(X) &\eqdef \Loc & \Outs_{\mathtt{dealloc}}(X) &\eqdef \Tunit
  \end{align*}
  where $\Loc$ is a countable type of locations/pointers.
  We write $\ALLOC$, $\READ$, $\WRITE$, and $\DEALLOC$ for the following \gitrees:
  \begin{align*}
    \ALLOC(\alpha : \IT(A),k : \Loc \to \IT(A)) \eqdef{}& \Vis_{\mathtt{alloc}}(\Next(\alpha), \Next \circ k) \\
    \READ(\loc : \Loc) \eqdef{}& \Vis_{\mathtt{read}}(\loc, \Lam x. x)\\
    \WRITE(\loc : \Loc, \alpha : \IT(A)) \eqdef{}& \Vis_{\mathtt{write}}((\loc, \Next(\alpha)), \Lam x. \Next(\Rret(\inj(()))))\\
    \DEALLOC(\loc : \Loc) \eqdef{}& \Vis_{\mathtt{dealloc}}(\loc, \Lam x. \Next(\Rret(\inj(()))))
  \end{align*}
  Here we assume that the ground type $A$ is isomorphic to $\Tunit + B$ for some $B$, with the injection $\inj : \Tunit \to A$.
\end{example}

\subsection{Recursion Principle for Guarded Interaction Trees}
\label{sec:recursion}
In order to write programs that eliminate \gitrees, i.e. programs of type $\IT_{E}(A) \to P$, we need to come up with a suitable recursion principle.
Recursion principles for inductive datatypes usually follow from the initiality principles of the defined datatypes.
However, the type of \gitrees is not purely inductive, as it has mixed-variance recursive occurrences, and the corresponding recursion principle should reflect that.
To understand the necessary recursion principle we need to understand first how the \gitrees are defined.
The definition at the beginning of this section presents \gitrees as a guarded datatype, but how should such a datatype be constructed?
In the type theory, the type $\IT_E(A)$ is given as the solution to the following guarded equation:
\begin{align}
  \label{eq:gitree_def}
  \begin{split}
    \IT_E(A) \simeq{}& A + \latert [\IT_E(A) \to \IT_E(A)] + \Error + \latert \IT_E(A)  +{}\\
& \Sigma_{\idx \in E} \big( \Ins_{\idx}(\latert \IT_E(A)) \times (\Outs_{\idx}(\latert \IT_E(A)) \to \latert \IT_E(A))\big)
  \end{split}
\end{align}
The isomorphism is witnessed by the pair of functions $(\unfoldIT, \foldIT)$,
and the constructors we presented at the beginning of the section are compositions of injections and $\foldIT$.
Since \Cref{eq:gitree_def} contains recursive occurrences with mixed variance, we cannot use the usual recursion principle for inductive data type.
Instead, we employ a mixed initial-algebra/final-coalgebra principle, following \cite{Freyd:1991,Pitts:1996}. 
To understand it better, we first write out the bi-functor, where the fixed point corresponds to the type of \gitrees:\footnote{For a detailed category-theoretic treatment, see \cite{BirkedalEtAl:2010a}.}
\begin{align*}
  F(X,Y) \simeq{}& A + \latert [X \to Y] + \Error + \latert Y  +{}
   \Sigma_{\idx \in E} \big( \Ins_{\idx}(\latert Y) \times (\Outs_{\idx}(\latert X) \to \latert Y)\big)
\end{align*}
Here the bi-functor $F(-,-)$ is contravariant in the first argument and covariant in the second one.
The bi-algebra corresponding to the type of \gitrees is given by the $(\foldIT,\unfoldIT)$ pair:
\[
  \xymatrix{
    F(\IT,\IT) \ar@<-0.5ex>[r]_(0.6){\foldIT} &
    \IT \ar@<-0.5ex>[l]_(0.35){\unfoldIT}
  }
\]
where we write $\IT$ as a shorthand for $\IT_{E}(A)$.
The recursion principle that we are looking for then states that this bi-algebra is both initial and terminal.
That is, for any other bi-algebra $(P,f,g)$ we have unique maps $h$ and $k$ such that the following diagram commutes:
\[
  \xymatrix@C=+10pc{
    F(P,P) \ar@<-0.5ex>[d]_{f} \ar@<0.5ex>[r]^{F(k,h)}
         & F(\IT,\IT) \ar@<-0.5ex>[d]_{\foldIT} \ar@<0.5ex>[l]^{F(h,k)} \\
    P \ar@<-0.5ex>[u]_{g} \ar@<0.5ex>[r]^{h} & \ar@<0.5ex>[l]^{k} \IT \ar@<-0.5ex>[u]_{\unfoldIT}
  }
\]
That is, in order to construct a function $k : \IT \to P$, one has to provide
the ``unfolding'' $P \to F(P,P)$, as well as
functions
$A \to P$, $\Error \to P$, $\latert P \to P$, $\latert (P \to P) \to P$, and $\prod_{\idx\in E} \Ins_{\idx}(\latert P) \to (\Outs_{\idx}(\latert P) \to \latert P) \to P$.
This alone would allow us to \emph{iterate} over \gitrees.
However, we would run intro trouble if we wish to write a primitive-recursive style function.
For example, we might wish to write a destructor function $k$ such that $k(\alpha)$ returns $\alpha$ if $\alpha$ itself is a function $\Fun(f)$, and $\Err(\RunTime)$ otherwise.
We cannot do so with the scheme outlined above, since in the recursive call we don't have access to the original argument, only to the result of applying recursion to the argument.
This is similar to how the iteration scheme $B \to (B \to B) \to \mathbb{N} \to B$ for natural numbers does not allow us to (easily) write the predecessor function $p : \mathbb{N} \to \mathbb{N}$ satisfying $p(0) = 0$ and $p(n+1) = n$ if we pick $B = \mathbb{N}$.

\paragraph{Recursion from iteration on inductive types.}
Let us then look at how to solve the issue of defining primitive recursive functions on inductive types using initiality.
Suppose the function $p: \mathbb{N}\to B$ that we want to construct  is defined by equations $p(0) = p_{1}$ and $p(n+1)=p_{2}(n, p(n))$.
Then we can obtain this function $p$ using the following trick:
instead of eliminating $\mathbb{N}$ into $B$ using initiality, we eliminate it into $\mathbb{N} \times B$, in such a way that the induced map $\mathbb{N} \to \mathbb{N}\times B$ is the identity on the first component.
More concretely, suppose we have maps $p_1 : 1 \to B$ and $p_2 : \mathbb{N}\times B \to B$, forming together the equations for primitive recursion.
Then we construct an $\mathbb{N}$-algebra over $\mathbb{N} \times B$ as
\[
  \xymatrix@+3pc{
    1 + (\mathbb{N} \times B) \ar[r]^{[\langle 0, p_1 \rangle, \langle S, p_2 \rangle]} &
    \mathbb{N} \times B
  }
\]
where $S : \mathbb{N} \to \mathbb{N}$ is the successor function.
The initiality of $\mathbb{N}$ will then induce the unique map $p : \mathbb{N} \to \mathbb{N} \times B$, which, when composed with projection $\mathbb{N}\times B \to B$, determines the recursive function given by the clauses $p_1$ and $p_2$.

\paragraph{Recursion/corecursion from mixed-variance types.}
Dually, for coinductive datatypes we can obtain a form of primitive coinduction from coiteration by using coproducts.
In our case we have a datatype with mixed variance, and we use coproducts for the negative occurrences and products for the positive ones.
That is, in order to eliminate $\IT$ into a type $P$ we will assume an unfolding $P \to F(P,P)$, and a folding $F(\IT+P, \IT\times P) \to P$.
More concretely:
\begin{definition}[Recursion/corecursion principle]
  In order to define a pair of maps $\xymatrix{P \ar@<0.5ex>[r]^{h} & \IT \ar@<0.5ex>[l]^{k}}$, one has to define maps
  \begin{itemize}
  \item $h_u : P \to A  + \latert [P \to P] + \Error + \latert P + \Sigma_{\idx \in E} \big( \Ins_{\idx}(\latert P) \times (\Outs_{\idx}(\latert P) \to \latert P)\big)$;
    
  \item $k_{\Rret} : A \to P$;
  \item $k_{\Fun} : \latert \big((\IT + P) \to (\IT \times P)\big) \to P$;
  \item $k_{\Err} : \Error \to P$;
  \item $k_{\Tau} : \latert (\IT \times P) \to P$;
  \item $k_{\Vis} : \prod_{\idx : E} \Ins_{\idx}(\latert(\IT \times P)) \to (\Outs_{\idx}(\latert(\IT + P)) \to
    \latert (\IT \times P)) \to P$.
  \end{itemize}
  The resulting maps $(h,k)$ will then satisfy the following \emph{computational rules}:
  \begin{itemize}
  \item $k(\Rret(a)) = k_{\Rret}(a)$;
  \item $k(\Fun(f)) = k_{\Fun}(\latert s(f))$
    where $s(f) = \langle \idfun, k \rangle\circ f \circ [\idfun,h]$;
  \item $k(\Err(e)) = k_{\Err}(e)$;
  \item $k(\Tau(t)) = k_{\Tau}\big(\latert\langle \idfun, k\rangle(t) \big)$;
  \item $k(\Vis_{\idx}(x,k)) = k_{\Vis}(\idx,\ \Ins_{\idx}(\latert\langle\idfun,k\rangle)(x),
    \ \latert\langle\idfun,k\rangle \circ k \circ \Outs_{\idx}(\latert [\idfun,h]))$;
  \item plus equations for $h$.
  \end{itemize}
(Recall that we write $\latert s : \latert A  \to \latert B$ for a function $s : A \to B$.)
\end{definition}
This recursion/corecursion principle is constructed using guarded recursion, and can be used to define a large variety of combinators.
For example, we can write a generalization of the aforementioned function $k$ that returns its argument, if the argument is a function, and returns an error otherwise.

Using the recursion principle we can define a function $\getfun(\alpha : \IT, f : \latert(\IT \to \IT) \to \IT)$ satisfying the computational rules
  \begin{itemize}
  \item $\getfun(\Rret(a),f) = \Err(\RunTime)$;
  \item $\getfun(\Fun(g),f) = f(g)$;
  \item $\getfun(\Err(e),f) = \Err(e)$;
  \item $\getfun(\Tau(t),f) = \Tau(\latert \getfun(t,f))$ and $\getfun(\Tick(\alpha),f) = \Tick(\getfun(\alpha),f)$;
  \item $\getfun(\Vis_i(x,k),f) = \Vis_i(x, \latert \getfun(-,f) \circ k)$.
  \end{itemize}
In the next section we will see how to use $\getfun$ to define an application function $\APPs{\alpha}{\beta}$ for applying a GITree function $\alpha$ to a GITree argument $\beta$. 
  
In the rest of the paper we will define other operations on \gitrees using just the computational rules, with the understanding that we can write down the explicit recursor for any such set of equations.
Interested readers are referred to the Coq formalization for the details.

\subsection{Programming with \gitrees}
\label{sec:programming}
Using the recursion principle we can define operations on \gitrees that correspond to common programming constructs.
For example, with $\getfun$ we can write a function $\APPl(\alpha,\beta)$, which applies $\alpha$ to $\beta$ if $\alpha$ is a function, and returns $\Err(\RunTime)$ otherwise.
\begin{equation*}
  \APPl(\alpha, \beta) = \getfun(\alpha,\Lam g.\Tau(\latert g(\beta))).
\end{equation*}
This operation gives us ``call-by-name'' application, in the sense that it satisfies $$\APPl(\Fun(\Next(g)),\beta) = \Tick(g(\beta))$$ for $g : \IT_{E}(A) \to \IT_{E}(A)$ for any argument $\beta$.
In particular, it invokes the underlying function $g$ even if the argument $\beta$ is a $\Tick$ or an effect, without evaluating the argument first.
In order to define a ``call-by-value'' application, we compose $\APPl$ with the following operation.

The function $\getval(\alpha, f)$ recurses into its argument, looking under $\Tick$s and $\Vis$'s, until it reaches either a function or a ground type (i.e. a value from $\ITv_{E}(A)$), after which it applies the function $f : \IT_{E}(A) \to \IT_{E}(A)$ to it:
\begin{align*}
  \getval(\Rret(a),f) & = f(\Rret(a)) &
     \getval(\Fun(g),f) &= f(\Fun(g))\\
  \getval(\Err(e),f) &= \Err(e) &
     \getval(\Tick(\alpha),f) &= \Tick(\getval(\alpha,f)) \\
  \getval(\Vis_i(x,k),f) &= \Vis_i(x, \latert \getval(-,f) \circ k)
\end{align*}
As syntactic sugar, we write $\LET x=\alpha IN \beta(x)$ for $\getval(\alpha, \Lam x.\beta(x))$.

Now we can define the ``call-by-value'' application $\APPs{\alpha}{\beta}$
as $\getval\left(\beta, \Lam \beta_v. \APPl(\alpha, \beta_v)\right)$.
This strict application then satisfies the following computational rules
\begin{align*}
  \APPs{\alpha}{\Tick(\beta)} &= \Tick(\APPs{\alpha}{\beta})
  & \APPs{\alpha}{\Vis_i(x,k)} &= \Vis_i(x, \Lam y. \APPsl{\Next(\alpha)}{k\ y})\\
  \APPs{\Tick(\alpha)}{\beta_v} &= \Tick(\APPs{\alpha}{\beta_v})
  & \APPs{\Vis_i(x,k)}{\beta_v} &= \Vis_i(x, \Lam y. \APPsl{k\ y}{\Next(\beta_v)})\\
  \APPs{\Fun(\Next(g))}{\beta_v} &= \Tick(g(\beta_v))
  & \APPs{\alpha}{\beta} &= \Err(\RunTime) \mbox{ in other cases }
\end{align*}
Where $\APPsl{-}{-}$ is the lifting of $\APPs{-}{-}$ to $\latert \IT_{E}(A) \to \latert \IT_{E}(A) \to \latert \IT_{E}(A)$, and $\beta_v \in \ITv_{E}(A)$ is either $\Rret(a)$ or $\Fun(g)$.
The application function $\APPs{-}{-}$ not only simulates strict application, but it also fixes a right-to-left evaluation order of effects and computation steps.

One can see that there are common properties for the computational rules between
$\getfun(-,f)$, $\APPl(-,\beta)$, $\APPs{\alpha}{-}$, and $\APPs{-}{\beta_v}$ (where $\beta_v \in \ITv_{E}(A)$): they all preserve ticks, effects, and errors.
Functions that have these preservation properties are called \emph{homomorphisms} of \gitrees and will play an important role in later sections.
\begin{definition}
  \label{def:homomorphism}
  A function $f : \IT_{E}(A) \to \IT_{E}(A)$ is a \emph{homomorphism}, written as $f \in \Hom$, if it satisfies the following equations:
  \begin{itemize}
  \item $f(\Err(e)) = \Err(e)$;
  \item $f(\Tick(\alpha)) = \Tick(f(\alpha))$;
  \item $f(\Vis_i(x,k)) = \Vis_i(x, \latert f \circ k)$
  \end{itemize}
  As expected from the name, the identity function is a homomorphism and composition of two homomorphisms is a homomorphism.
This notion of homomorphism is inspired by the one in \cite{Hoshino:2012}.
It follows from the definition, that in order to define a homomorphism it suffices to define its action on \gitree values.
\end{definition}

\paragraph{Programming with \gitrees and natural numbers.}
In the remainder of this paper we work with a lot of examples involving programming with natural numbers (as an illustrative ground type).
It is then useful to assume in the remainder of this paper that the ground type $A$ in any type $\IT_{E}(A)$ of \gitrees is ``large enough'' to contain natural numbers, and the unit type.
That is, we assume that $A \simeq \Tunit + \Tnat + \dots$, and we simply write $\Rret(n)$ for $\Rret(\inj(n))$ and $\Rret(())$ for $\Rret(\inj'(()))$ (for appropriate injections $\inj : \Tnat \to A$ and $\inj' : \Tunit \to A$).
We will also abbreviate $\IT_{E}(A)$ as $\IT$ or $\IT_{E}$ when $A$ is generic as above or is clear from the context.

In \Cref{fig:lambda_ops} we summarize the operations on \gitrees that we define using recursion and other functions.
The computational rules described in \Cref{fig:lambda_ops} are only for the base cases; the other computational rules follow from the fact that those operations are homomorphisms.
Concretely, we have the following operations.
The $\getnat$ function extracts a natural number from a \gitree and applies the function $f : \Tnat \to \IT$ to it.
It is a homomorphism in the first argument.
If it encounters a function $\Fun(g)$ or a different ground value $\Rret(b)$, then it returns an error.
The $\IF$ operations test whether the first argument is zero, and picks the appropriate branch.
The function $\IF(-,\alpha_1,\alpha_2)$ is a homomorphism.
Similarly, if the first argument is not a natural number then $\IF$ returns an error.
The $\NATOP_f$ operation applies the binary function $f$ to its integer arguments, returning an error on all the other values.
The maps $\NATOP_f(\alpha,-)$ and $\NATOP_f(-,\beta_v)$ are homomorphisms for $\beta_v \in \ITv$.
The $\alpha\SEQ \beta$ is a sequencing operation: it puts all the effects and ticks in $\alpha$ before the effects and ticks in $\beta$.
This is witnessed by the fact that $(-) \SEQ \alpha$ is a homomorphism.
The $\WHILE{\alpha}{\beta}$ represents a while loop with the conditional $\alpha$ and the body $\beta$; it is defined using guarded recursion, and is equal to its one-step unfolding using the $\IF$ construct.

\begin{figure}[t]\small
  \begin{mathpar}
    \infer{n \in \Tnat}{\getnat(\Rret(n), f) = f(n)}
    \and
    \infer{b \notin \Tnat}{\getnat(\Rret(b), f) = \Err(\RunTime)}
    \and
    \getnat(\Fun(g), f) = \Err(\RunTime)
    \and
    \getfun(\Rret(a), f) = \Err(\RunTime)
    \and
    \getfun(\Fun(g), f) = f(g)
    \and
    \IF(\Rret(0),\alpha_1,\alpha_2) = \alpha_1
    \and
    \infer{n > 0}
    {\IF(\Rret(n),\alpha_1,\alpha_2) = \alpha_2}
    \and
    \IF(\Fun(f),\alpha_1,\alpha_2) = \Err(\RunTime)
    \and
    \infer{b \notin \Tnat}{\IF(\Rret(b),\alpha_1,\alpha_2) = \Err(\RunTime)}
    \and
    \infer{n_1, n_2 \in \Tnat
    }
    {\NATOP_f(\Rret(n_1),\Rret(n_2)) = \Rret(f(n_1,n_2))}
    \and
    \infer{\alpha_v \mbox{ or } \beta_v \mbox{ are not $\Rret(n)$}}{\NATOP_f(\alpha_v,\beta_v) = \Err(\RunTime)}
    \and
    \beta_v \SEQ \alpha = \alpha
\and
    \WHILE{\alpha}{\beta} = \IF\big(\alpha, (\beta \SEQ \Tick(\WHILE{\alpha}{\beta})), \Rret(())\big)
  \end{mathpar}
  \caption{Programming operations on \gitrees.}
  \label{fig:lambda_ops}
\end{figure}

Let us look at some example programs that we can write using the operations we have defined.
\begin{example}[Factorial]
\label{ex:factorial}
  In the first example, we have a factorial function that we implement using the store operations (\Cref{ex:store_sig}).
\begin{align*}
  \fact(n) \eqdef{}& \ALLOC\left(\Rret(1), \Lam \acc. \ALLOC(\Rret(n), \Lam \ell.\factbody(\acc,\ell) \SEQ \READ(\acc))\right)\\
  \factbody(\acc,\ell) \eqdef{} &
  \WHILE{\READ(\ell)}
  {\\ & \quad
  \begin{aligned}[t]
    &\LET i = \READ(\ell) IN\\
    &\LET r = \NATOP_{\times}(i, \READ(\acc)) IN \\
    &\LET i = \NATOP_{-}(i,\Rret(1)) IN\\
    &\WRITE(\acc, r) \SEQ  \WRITE(\ell, i)
  \end{aligned}
  }
\end{align*}
The program $\factbody$ computes the factorial of the number stored in the location $\ell$ using an intermediate location $\acc$ for the accumulated result.
The complete program $\fact$ then allocates the needed references and runs $\factbody$ before reading off the result from the location $\acc$.
\end{example}

\begin{example}[Encoding of pairs]
  Our definition of \gitrees does not include arbitrary algebraic datatypes, like pairs or sums.
We can, however, encode them using a Church-style encoding.
We write $(\alpha, \beta) :\IT$ for the guarded interaction tree
$$
\LET y=\beta IN \LET x=\alpha IN \Fun(\Next(\Lam f. \APPs{\APPs{f}{x}}{y})).
$$
Note that $(\alpha_v,\beta_v)$ is a \gitree value whenever $\alpha_v$ and $\beta_v$ are.
Furthermore, $(\alpha, -)$ and $(-, \beta_v)$ are homomorphisms.
We then define the projection functions as
$$
\Proj{1}(\alpha) = \APPs{\alpha}{\Fun(\Next(\Lam a.\Fun(\Next(\Lam b.a))))}
\qquad
\Proj{2}(\alpha) = \APPs{\alpha}{\Fun(\Next(\Lam a.\Fun(\Next(\Lam b.b))))}.
$$
The projection functions then satisfy the following computational rules:
$$
\Proj{1}(\alpha_v,\beta_v) = \Tick^{3}(\alpha_v)
\qquad \Proj{2}(\alpha_v,\beta_v) = \Tick^{3}(\beta_v).
$$
We can use similar style encodings to represent other algebraic datatypes as guarded interaction trees.
\end{example}

\section{Reification of Effects and Reductions of \gitrees}
\label{sec:reductions}
\gitrees allow us to conveniently write down and combine various effects.
But in order to reason about the effects we also need a way of giving them meaning.
In this section we establish a way of \emph{reifying} effects of \gitrees and use reification to define \emph{reductions} of \gitrees, which explain how computations represented by \gitrees reduce.

In order to interpret stateful effects we assume that we have a type $\stateO$, and each effect is interpreted using the state monad with a function:
\[
  r : \prod_{\idx \in E} \Ins_{\idx}(\latert \IT_E) \times \stateO \to \optionO(\Outs_{\idx}(\latert\IT_E) \times \stateO). 
\]
We call a tuple $(E, \stateO, r)$ a \emph{reifier} for the effects $E$.
Assuming we have such a reifier, we write a function $\reify : \IT \times \stateO \to \IT \times \stateO$ that satisfies
\begin{mathpar}
  \infer
  {r_i(x,\sigma) = \Some(y, \sigma') \and k\ y = \Next(\beta)}
  {\reify(\Vis_i(x, k), \sigma) = (\Tick(\beta), \sigma')}
  \and
  \infer
  {r_i(x,\sigma) = \None}
  {\reify(\Vis_i(x, k), \sigma) = (\Err(\RunTime), \sigma)}
\end{mathpar}

\begin{example}[Reification for the input/output operations \Cref{ex:io_sig}]
  \label{ex:io_reify}
  We take the state $\stateO$ to be a pair of two lists of natural numbers, corresponding to input and output tapes.
  The reifier is defined as
  \begin{align*}
    r_{\mathtt{input}}((),(n\vec{n},\vec{m})) &= \Some(n, (\vec{n},\vec{m}))
    & r_{\mathtt{input}}((),(\epsilon,\vec{m})) &= \None\\
    r_{\mathtt{output}}(x,(\vec{n},\vec{m})) &= \Some((), (\vec{n},x\vec{m}))
  \end{align*}
\end{example}
\begin{example}[Reification for the higher-order store operations \Cref{ex:store_sig}]
  \label{ex:store_re}
  For higher-order store we take $\stateO$ to be the type of finite partial maps
  $\Loc \fpfn \latert \IT$.
  The reifier function is defined in the expected way:
  \begin{align*}
    r_{\mathtt{alloc}}(\alpha, \sigma) &= \Some(\loc, \sigma[\loc \mapsto \alpha]) \mbox{\qquad where $\loc$ is the smallest location not present in $\sigma$}\\
    r_{\mathtt{read}}(\loc, \sigma) &=
       \begin{cases}
         \Some(\alpha,\sigma) & \mbox{ if } \sigma(\loc) = \alpha\\
         \None & \mbox{ otherwise}
       \end{cases}\\
    r_{\mathtt{write}}((\loc,\beta), \sigma) &=
       \begin{cases}
         \Some((),\sigma[\loc\mapsto\beta]) & \mbox{ if $\sigma(\loc)$ is defined}\\
         \None & \mbox{ otherwise}
       \end{cases}\\
    r_{\mathtt{dealloc}}(\loc, \sigma) &=
       \begin{cases}
         \Some((),\sigma\setminus\{\loc\}) & \mbox{ if $\sigma(\loc)$ is defined}\\
         \None & \mbox{ otherwise}
       \end{cases}
  \end{align*}
\end{example}

\paragraph{From reification to reductions.}
Using the reification function we can formulate the reduction relation on interaction trees.
The \emph{internal} reduction relation $\istep : (\IT \times \stateO) \to (\IT \times \stateO) \to \Prop$:
\[
  (\alpha,\sigma) \istep (\beta,\sigma') \eqdef
  \big(\alpha = \Tick(\beta) \wedge \sigma = \sigma' \big)
  \vee \big(\Exists i\,x\,k. \alpha = \Vis_i(x,k)
  \wedge \reify(\alpha,\sigma) = (\Tick(\beta),\sigma')\big)
\]
Intuitively, a reduction of \gitrees corresponds to either stripping away one computational step, or to reifying an effect.
We consider an (annotated) transitive closure of the reduction  relation:
\begin{align*}
  (\alpha, \sigma) \istep^0 (\beta, \sigma') \eqdef{}&
   \alpha = \beta \wedge \sigma = \sigma' \\
  (\alpha, \sigma) \istep^{n+1} (\beta, \sigma') \eqdef{}&
  \Exists \alpha_0,\,\sigma_0. (\alpha, \sigma) \istep (\alpha_0,\sigma_0)
          \wedge (\alpha_0,\sigma_0) \istep^n (\beta,\sigma')
\end{align*}
We write $\istep^{\ast}$ for the reflexive transitive closure of the reduction relation.

\paragraph{Reductions and homomorphisms}
Homomorphisms (\Cref{def:homomorphism}) play an important role in the reduction relation, allowing us to compute the reductions more easily.
Specifically, homomorphisms preserve and reflect reductions:
\begin{lemma}
  \label{lem:hom_istep}
  Let $f$ be a homomorphism.
  Then $(\alpha,\sigma)\istep(\beta,\sigma')$ implies
  $(f(\alpha),\sigma)\istep(f(\beta),\sigma')$.
\end{lemma}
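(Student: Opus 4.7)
The plan is to do a case analysis on the two disjuncts in the definition of $\istep$. Both cases reduce to a direct computation using the homomorphism equations of \Cref{def:homomorphism}.

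In the first case, we have $\alpha = \Tick(\beta)$ and $\sigma = \sigma'$. Applying the homomorphism equation $f(\Tick(\gamma)) = \Tick(f(\gamma))$ gives $f(\alpha) = \Tick(f(\beta))$, so $(f(\alpha),\sigma) \istep (f(\beta),\sigma')$ by picking the first disjunct on the target side.

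In the second case, $\alpha = \Vis_i(x,k)$ and $\reify(\alpha,\sigma) = (\Tick(\beta),\sigma')$. By the first defining equation of $\reify$ (and excluding the $\None$ branch since the result is a $\Tick$, not $\Err(\RunTime)$, and invoking injectivity of $\Tick$ for equality of the resulting component), there must exist $y$ such that $r_i(x,\sigma) = \Some(y,\sigma')$ and $k\,y = \Next(\beta)$. Now the $\Vis$ equation for homomorphisms gives $f(\alpha) = \Vis_i(x, \latert f \circ k)$, and since the input $x$ and state $\sigma$ are unchanged, $r_i(x,\sigma) = \Some(y,\sigma')$ still holds for $f(\alpha)$. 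Unfolding the continuation at $y$ using functoriality of $\latert$: $(\latert f \circ k)(y) = \latert f(\Next(\beta)) = \Next(f(\beta))$. Therefore $\reify(f(\alpha),\sigma) = (\Tick(f(\beta)),\sigma')$, which yields $(f(\alpha),\sigma)\istep (f(\beta),\sigma')$ via the second disjunct.

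The main subtlety — not really an obstacle, but the step that requires care — is the $\Vis$ case: one needs to unfold the action of the $\latert$ functor on the continuation $k$ and use that $\latert f$ commutes with $\Next$, so that the witness equality $k\,y = \Next(\beta)$ transports across $f$ to $(\latert f \circ k)\,y = \Next(f(\beta))$. Everything else is routine rewriting using the homomorphism equations and the two inference rules defining $\reify$ on $\Vis$ nodes.
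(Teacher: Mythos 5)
Your proof is correct and is the evident argument: the paper states \Cref{lem:hom_istep} without proof, and the intended reasoning is exactly your case analysis on the two disjuncts of $\istep$, using the homomorphism equations for $\Tick$ and $\Vis$ together with the defining rules of $\reify$. The one point worth being slightly careful about (and which your sketch handles adequately) is the inversion step extracting $y$ with $r_i(x,\sigma)=\Some(y,\sigma')$ and $k\,y=\Next(\beta)$; note that in this guarded setting $\Tick$-injectivity only yields equalities under $\later$, but the needed identity $(\latert f\circ k)(y)=\Next(f(\beta))$ still follows since $\later$ is a congruence and $\Next(t)=\Next(t')$ is equivalent to $\later(t=t')$.
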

\begin{lemma}
  \label{lem:hom_istep_inv}
  Let $f$ be a homomorphism.
  If $(f(\alpha), \sigma)\istep(\beta',\sigma')$ then either
  \begin{itemize}
  \item $\alpha$ is a \gitree-value, or;
  \item there exists $\beta$ such that
    $(\alpha,\sigma)\istep(\beta,\sigma')$ and $\later(f(\beta) = \beta')$.
  \end{itemize}
\end{lemma}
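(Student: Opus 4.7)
The plan is to proceed by case analysis on the shape of $\alpha$, relying on the defining equations of a homomorphism from \Cref{def:homomorphism} and the definition of $\istep$.

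First, if $\alpha$ is a $\gitree$-value ($\Nat(n)$ or $\Fun(g)$), we are immediately done via the first disjunct. If $\alpha = \Err(e)$, then by the homomorphism property $f(\alpha) = \Err(e)$; but $\Err(e)$ is neither a $\Tick$ nor a $\Vis_i$, so the hypothesis $(f(\alpha),\sigma)\istep(\beta',\sigma')$ is vacuous by the definition of $\istep$.

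The substantive cases are the tick and the effect. If $\alpha = \Tick(\alpha_0)$, then $f(\alpha) = \Tick(f(\alpha_0))$. Because $\Tick(f(\alpha_0))$ is not of the form $\Vis_i(x,k)$, the step must come from the first disjunct of $\istep$: so $\sigma = \sigma'$ and $\Tick(f(\alpha_0)) = \Tick(\beta')$, which by the tick equality rule ($\Tick(u)=\Tick(v)\provesIff\later(u=v)$) yields $\later(f(\alpha_0) = \beta')$. We then pick $\beta \eqdef \alpha_0$; clearly $(\Tick(\alpha_0),\sigma) \istep (\alpha_0,\sigma)$, and the required later-equality holds by construction.

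If $\alpha = \Vis_i(x,k)$, then $f(\alpha) = \Vis_i(x, \latert f\circ k)$. This time $f(\alpha)$ is not a $\Tick$, so the step must come from the $\Vis$ disjunct, giving $\reify(\Vis_i(x,\latert f\circ k),\sigma) = (\Tick(\beta'),\sigma')$. Inspecting the two reification rules, since the result is a $\Tick$ (not $\Err(\RunTime)$), we must be in the first case: there exists $y$ with $r_i(x,\sigma) = \Some(y,\sigma')$, and $(\latert f\circ k)(y) = \Next(\beta')$. Writing $k(y) = \Next(\beta_0)$ for some $\beta_0 : \IT$ (using that any inhabitant of $\latert \IT$ is, in the internal logic, of this shape up to later-equality), we get on the one hand $(\Vis_i(x,k),\sigma)\istep(\beta_0,\sigma')$ by applying the same reification rule to the original $k$, and on the other hand $\Next(f(\beta_0)) = (\latert f)(\Next(\beta_0)) = \Next(\beta')$, whence $\later(f(\beta_0) = \beta')$. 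Taking $\beta \eqdef \beta_0$ completes the case.

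The main obstacle is the $\Vis$ case: we must extract $\beta_0$ from the guarded continuation $k(y) : \latert \IT$ and only conclude equality $f(\beta_0) = \beta'$ under $\later$. Everything else is a straightforward unfolding of the homomorphism equations and the two clauses of $\istep$.
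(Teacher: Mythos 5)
Your proof is correct and proceeds exactly as the paper's (unwritten, Coq-formalized) argument must: case analysis on the summand of $\alpha$ under the isomorphism \eqref{eq:gitree_def}, using the homomorphism equations, disjointness of constructors to rule out the wrong disjunct of $\istep$, the tick-equality rule, and the fact that $\later(t = t') \provesIff \Next(t) = \Next(t')$ to land the equality under a $\later$. Your one delicate step --- writing $k(y) : \latert\IT$ as $\Next(\beta_0)$ --- is legitimate in the Iris/OFE reading of $\latert$ and is in any case exactly what the paper's own statement of the $\reify$ rule presupposes in its premise $k\ y = \Next(\beta)$.
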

These two lemmas suggest that homomorphisms play the role of evaluation contexts within the reduction relation $\istep$.
For example, if $(\alpha\SEQ \beta, \sigma)\istep(\delta, \sigma')$,
then either $\alpha$ was a value, or $(\alpha,\sigma)\istep(\alpha',\sigma')$
and $\later (\alpha'\SEQ \beta = \delta)$.

\paragraph{Continuation-independent reifiers.}
The reifiers that we consider here produce an output based on the input, but do not have direct access to the continuation.
The reification function $\reify$ just calls the continuation with the produced output.
This \emph{continuation-independence} is crucial for proving \Cref{lem:hom_istep_inv} (and the associated rule \ruleref{wp-hom} in separation logic in \Cref{sec:program_logic}).
Not all effects are continuation-independent, for example \texttt{call/cc} cannot be implemented this way.
In this paper, just like in \cite{XiaEtAl:2019},  we stick to working with continuation-independent effects, as it simplifies the separation logic and the reasoning principles, and we defer studying continuation-dependent effects to future work.

\section{Modeling a Higher-Order Effectful Programming Language}
\label{sec:iolang_model}
In this section we show how guarded interaction trees provide a model for a programming language with recursion, higher-order functions, and effects.
Specifically, we study a PCF-like higher-order programming language with input/output effects, give its interpretation into $\IT_{io}$ (see \Cref{ex:io_sig,ex:io_reify}), and show its soundness, i.e., that the interpretation agrees with the operational semantics.
The same approach applies to other classes of effects for which you can write operational semantics.

\paragraph{Syntax and operational semantics}
The syntax for the programming language, which we dub $\iolang$, consists of values and expressions:
\begin{align*}
\val \in \Val \bnfdef{}&
  n \ALT
  \Rec f \var = \expr \\
\expr \in \Expr \bnfdef{}&
  \var \ALT
  \val \ALT
  \If \expr then \expr_1 \Else \expr_2 \ALT
  \expr_1 (\expr_2) \ALT \expr_1 + \expr_2 \ALT \expr_1 - \expr_2 \ALT
  \Input \ALT \Output(\expr)
\end{align*}
where $n$ ranges over the set of natural numbers, and $f,\var$ range over the set $\Var$ of variables.

The operational semantics for $\iolang$ is given in \Cref{fig:iolang_opsem} as a small-step reduction relation on the configurations $\Expr \times \stateO$, where $\stateO$ is a pair of lists as in \Cref{ex:io_reify}.
\begin{figure}[t]\small
\begin{mathpar}
  \axiomH{red-beta}
  {((\Rec f \var = \expr)\ \val, \sigma) \step (\subst{\subst{\expr}{\var}{\val}}{f}{\Rec f \var = \expr},\sigma)}
  \and
  \inferH{red-natop}{n_1,n_2 \in \mathbb{N} \and \oplus \in \{+,-, \times, \dots\} \and n_1 \oplus n_2 = n}  
  {(n_1 \oplus n_2, \sigma) \step (n, \sigma)}
  \and
  \axiomH{red-if-false}
  {(\If 0 then \expr_1 \Else \expr_2, \sigma) \step (\expr_2, \sigma)}
  \and
  \inferH{red-if-true}{n \in \mathbb{N} \and n > 0}
  {(\If n then \expr_1 \Else \expr_2, \sigma) \step (\expr_1, \sigma)}
  \and
  \inferH{red-input}{}
  {(\Input, (n'\vec{n},\vec{m})) \step (n', (\vec{n},\vec{m}))}
  \and
  \inferH{red-output}{}
  {(\Output(m), (\vec{n},\vec{m})) \step (0, (\vec{n},m'\vec{m}))}
  \and
  \inferH{red-ectx}
  {(\expr_1,\sigma_1) \step (\expr_2, \sigma_2)}
  {(\fill\elctx[\expr_1],\sigma_1)\step (\fill\elctx[\expr_1],\sigma_2)}
\end{mathpar}
\caption{Small-step operational semantics for $\iolang$.}
\label{fig:iolang_opsem}
\end{figure}
The reductions are defined, following \cite{felleisen:hieb:1992}, using \emph{evaluation contexts} $\elctx \in \Ectx$, given as:
\begin{align*}
  \elctx \in \Ectx \bnfdef \ctxhole \ALT \Output(\elctx) \ALT \If \elctx then \expr_1 \Else \expr_2
  \ALT \expr\ \elctx \ALT \elctx\ \val \ALT \expr \oplus \elctx \ALT \elctx \oplus \val
\end{align*}
By $\fill\elctx[\expr]$ we denote the result of replacing the hole $\ctxhole$ in the context $\elctx$ with the expression $\expr$.
The evaluation contexts ensure the call-by-value right-to-left evaluation order of $\iolang$, as having a predefined evaluation order is important in the presence of effects.

\paragraph{Interpretation in guarded interaction trees}
We will interpret a closed program $\expr$ as an interaction tree $\Sem{\expr}:\IT_{\mathit{io}}(A)$.
The effects $\mathit{io}$ are those of \Cref{ex:io_sig,ex:io_reify}, and we assume that the ground type $A$ is ``large enough'' to have natural numbers.
For convenience, we drop the ground type and write simply $\IT_{\mathit{io}}$ for $\IT_{\mathit{io}}(A)$.

In order to provide a (compositional) denotational semantics we need to provide an interpretation not only for closed terms, but for open terms as well.
Given a set $\fv(\expr)=\{x_1, \dots, x_n\}$ of free variables of $\expr$, we define the interpretation $\Sem{\expr}_\rho:\IT_{\mathit{io}}$, where $\rho$ maps the free variables of $\expr$ to interaction trees.
The interpretation function is defined in \Cref{fig:iolang_interp}.
\begin{figure}[t]\small
\begin{mathpar}
\Sem{\var}_\rho = \rho(\var)
\and
\Sem{n}_\rho = \Rret(n)
\and
\Sem{\If \expr then \expr_1 \Else \expr_2}_\rho =
\IF(\Sem{\expr}_\rho,\Sem{\expr_1}_\rho,\Sem{\expr_2}_\rho)
\and
\infer{\oplus \in \{ +, -, \times, \dots \}}
{\Sem{\expr_1 \oplus \expr_2}_\rho = \NATOP_{\oplus}(\Sem{\expr_1}_\rho,\Sem{\expr_2}_\rho)}
\and
\Sem{\Input}_{\rho}=\INPUT
\and
\Sem{\Output(e)}_{\rho}=\getnat(\Sem{e}_\rho,\OUTPUT)
\and
\Sem{\expr_1\ \expr_2}_{\rho} = \APPs{\Sem{\expr_1}_\rho}{\Sem{\expr_2}_\rho}
\and
\Sem{\Rec f x = \expr}_{\rho} = \fix_{\IT}(\Lam (t:\latert \IT).
\Fun(\latert (\Lam \alpha\ v. \Sem{\expr}_{\rho[x\mapsto v][f \mapsto\alpha]})(t))).
\end{mathpar}
\caption{Semantic interpretation for $\iolang$.}
\label{fig:iolang_interp}
\end{figure}
The definition follows the standard notion of semantics for (untyped) $\lambda$-calculus, adjusted for effects and explicit recursion.
The interpretation of recursive functions $\Rec f x = \expr$ is defined using the guarded fixed pointed operation $\fix_{\IT} : (\latert \IT \to \IT) \to \IT$, and satisfies the following equality:
$$
\Sem{\Rec f x = \expr}_{\rho} = \Fun(\Next(\Lam v.\Sem{\expr}_{\rho[x\mapsto v][f\mapsto\Sem{\Rec f x = \expr}_{\rho}}))
$$
We show that the interpretation is sound:
\begin{theorem}[Soundness]
  \label{thm:iolang_soundness}
  If $(\expr_1,\sigma_1) \step (\expr_2,\sigma_2)$, then
         $(\Sem{\expr_1},\sigma_1)\istep^{\ast}(\Sem{\expr_2},\sigma_2)$.
\end{theorem}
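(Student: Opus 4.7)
The plan is to prove the soundness theorem by induction on the derivation of $(\expr_1,\sigma_1) \step (\expr_2,\sigma_2)$, handling each of the head-reduction axioms in \Cref{fig:iolang_opsem} as a base case and reducing the contextual rule \ruleref{red-ectx} to a compositionality principle that exposes evaluation contexts as \gitree homomorphisms.

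First I would establish a \emph{substitution lemma} at the semantic level: for any expression $\expr$, value $\val$, and environment $\rho$,
\[
  \Sem{\subst{\expr}{\var}{\val}}_\rho = \Sem{\expr}_{\rho[\var \mapsto \Sem{\val}_\rho]},
\]
proved by induction on $\expr$. This is routine but essential for handling \ruleref{red-beta}. I would also note that values are interpreted as \gitree values (either $\Nat(n)$ or $\Fun(\Next(\ldots))$), so that the call-by-value machinery of $\APPs{-}{-}$ fires without getting stuck.

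Next, I would dispatch the head-reduction base cases by direct calculation using the computational rules of the combinators:
\begin{itemize}
\item For \ruleref{red-beta}, using the equation for $\Sem{\Rec f \var = \expr}$ and the rule $\APPs{\Fun(\Next(g))}{\beta_v} = \Tick(g(\beta_v))$, we get $(\Sem{(\Rec f \var = \expr)\,\val},\sigma) \istep (\Sem{\expr}_{\rho'},\sigma)$ where $\rho' = [\var\mapsto\Sem{\val}, f \mapsto \Sem{\Rec f \var = \expr}]$; the substitution lemma then identifies this with $\Sem{\subst{\subst{\expr}{\var}{\val}}{f}{\Rec f \var = \expr}}$.
\item For \ruleref{red-natop}, \ruleref{red-if-false}, \ruleref{red-if-true}, the computational rules of $\NATOP_{\oplus}$ and $\IF$ on $\Nat$-values give the result in zero or one internal steps.
\item For \ruleref{red-input} and \ruleref{red-output}, I would unfold $\INPUT$ and $\OUTPUT$ and apply the reifier from \Cref{ex:io_reify}: the $\Vis$-node reifies to a $\Tick$ whose body is $\Nat(n')$ or, for output, the $\Nat(0)$ returned by $\getnat(-,\OUTPUT)$'s continuation, consistent with the source-level result.
\end{itemize}
In each case the source step corresponds to a small, concrete number of internal $\istep$ steps (one or two ticks), so $\istep^{\ast}$ suffices.

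The contextual case \ruleref{red-ectx} is where the main obstacle lies. I would prove by induction on the grammar of $\elctx$ that there exists a homomorphism $\HSem{\elctx}_\rho \in \Hom$ such that
\[
  \Sem{\fill\elctx[\expr]}_\rho = \HSem{\elctx}_\rho\bigl(\Sem{\expr}_\rho\bigr).
\]
For example, $\HSem{\Output(\elctx')}_\rho = \Lam \alpha.\, \getnat(\HSem{\elctx'}_\rho(\alpha), \OUTPUT)$, and for application frames $\elctx\,\val$ and $\expr\,\elctx$ one uses that $\APPs{-}{\Sem{\val}_\rho}$ and $\APPs{\Sem{\expr}_\rho}{-}$ are homomorphisms (as noted after \Cref{def:homomorphism}). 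Composition of homomorphisms is a homomorphism, so the induction goes through. The delicate point is checking the right evaluation order: the context $\expr\,\elctx$ corresponds to $\APPs{\Sem{\expr}_\rho}{-}$, which is a homomorphism in its second argument (before the first has been reduced to a value), matching the right-to-left, call-by-value order built into $\APPs{-}{-}$ and into $\Ectx$. Once the compositionality lemma is in place, \Cref{lem:hom_istep} immediately lifts the inductive hypothesis $(\Sem{\expr_1}_\rho,\sigma_1) \istep^{\ast} (\Sem{\expr_2}_\rho,\sigma_2)$ to the enclosing context, completing the proof.
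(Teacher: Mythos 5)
Your proposal matches the paper's proof essentially exactly: induction on the step derivation, a semantic substitution lemma for \ruleref{red-beta}, and an interpretation of evaluation contexts as homomorphisms combined with \Cref{lem:hom_istep} for \ruleref{red-ectx}. The only detail worth flagging is that the substitution lemma is not entirely ``routine'' --- the case of recursive functions requires L\"ob induction, since $\Sem{\Rec f \var = \expr}_\rho$ is defined via a guarded fixed point.
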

We prove \Cref{{thm:iolang_soundness}} by induction on the $\step$-derivation.
The most interesting cases are for the reductions \ruleref{red-beta} and \ruleref{red-ectx}, which we now sketch. 
For the former, we need a substitution lemma, and for the latter we need to extend the interpretation to evaluation contexts.
\begin{lemma}[Substitution lemma]
  For any expression $\expr$ with a free variable $x$ we have
$$
\Sem{\subst{\expr}{x}{\expr'}}_{\rho} = \Sem{\expr}_{\rho[x\mapsto\Sem{\expr'}]}.
$$
\end{lemma}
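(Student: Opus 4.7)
The plan is to proceed by structural induction on $\expr$. Before starting, I would slightly generalize the statement so that the substituted expression $\expr'$ need only be interpreted with respect to some \emph{fixed} environment (in particular the empty one if it is closed), and so that the induction hypothesis quantifies over all environments $\rho$; otherwise the binder cases would require re-instantiation. I would also assume a standard Barendregt-style variable convention, so that bound variables in $\expr$ do not clash with $x$ nor occur free in $\expr'$, which keeps the usual $\subst{(\Rec f y = e)}{x}{\expr'} = \Rec f y = \subst{e}{x}{\expr'}$ valid on the nose.

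The base and compositional cases are routine: for $\var = x$ both sides reduce to $\Sem{\expr'}$ and for $\var \neq x$ both sides reduce to $\rho(\var)$; for $n$ both sides are $\Nat(n)$; for $\Input$ both sides are $\INPUT$; and for every compound form ($\If \cdot then \cdot \Else \cdot$, $\expr_1 \oplus \expr_2$, $\expr_1\,\expr_2$, $\Output(\cdot)$) one simply pushes the substitution through the combinator on the semantic side ($\IF$, $\NATOP_\oplus$, $\APPs{\cdot}{\cdot}$, $\getnat(\cdot,\OUTPUT)$) and applies the induction hypothesis to each subterm.

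The only delicate case is $\Rec f y = \body$. Unfolding the definition in \Cref{fig:iolang_interp}, both sides of the desired equality are guarded fixed points of the form $\fix_{\IT}(G)$ for appropriate functorials $G$. By the induction hypothesis applied to $\body$, with any environment $\sigma$, one has $\Sem{\subst{\body}{x}{\expr'}}_{\sigma} = \Sem{\body}_{\sigma[x\mapsto\Sem{\expr'}]}$, so the two defining functionals agree pointwise on their arguments, and uniqueness of guarded fixed points gives equality of the fixed points themselves. Equivalently, one can close the knot by L\"ob induction: the hypothesis $\later(\Sem{\subst{(\Rec f y=\body)}{x}{\expr'}}_\rho = \Sem{\Rec f y = \body}_{\rho[x\mapsto\Sem{\expr'}]})$ combined with the fixed-point unfolding and the IH for $\body$ yields the non-later equality, after which L\"ob's rule discharges the assumption.

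The main obstacle is therefore this recursive-function case, where one must reason about guarded fixed points; the difficulty is not any concrete computation but rather ensuring that the induction hypothesis is formulated strongly enough (quantifying over all environments) for the inner call to $\Sem{\body}$ under the binders $f$ and $y$, and arranging the guarded recursion so that the self-reference through the recursive interpretation of $\Rec f y = \body$ sits under a $\latert$, which is exactly what makes L\"ob induction (or fixed-point uniqueness) applicable. All other cases should follow by a direct application of the induction hypothesis to the immediate subterms.
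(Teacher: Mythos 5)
Your proposal is correct and follows essentially the same route as the paper, which proves the lemma by induction on $\expr$ and uses L\"ob induction (equivalently, uniqueness of guarded fixed points) for the case of recursive functions. The additional care you take about generalizing the induction hypothesis over environments and about the variable convention is standard bookkeeping that the paper leaves implicit.
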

\begin{proof}
  By induction on $\expr$, using L\"ob induction in the case of recursive functions.
\end{proof}
In order to handle  \ruleref{red-ectx} we provide the following auxiliary interpretation for evaluation contexts.
Each evaluation context $\elctx$ is interpreted as a \emph{homomorphism} $\Sem{\elctx}_\rho : \IT \to \IT$, such that
$\Sem{\fill{\elctx}[\expr]}_\rho = \Sem{\elctx}_\rho(\Sem{\expr}_\rho)$,
which together with \Cref{lem:hom_istep} implies the soundness of the \ruleref{red-ectx} reduction.

\section{Separation Logic over \gitrees}
\label{sec:program_logic}
In this section we define a separation logic as a program logic for guarded interaction trees.
We define a proposition $\wpre{\alpha}{\Phi}$ to denote that an interaction tree $\alpha$ is safe to reduce, and if $\alpha$ reduces to an interaction tree value $\beta_v$, then $\beta_v$ satisfies the postcondition $\Phi : \ITv \to \Prop$.

In this section we make use of the separation logic connectives of Iris, which we recall here:
\begin{align*}\small
  \prop \bnfdef{}& \dots \ALT \prop \ast \prop \ALT
                   \prop \wand \prop \ALT \upd \prop \ALT \Box \prop \ALT \knowInv{}{P} \ALT\dots
\end{align*}
For brevity, we only briefly recall the intuitive reading of these propositions and refer to \cite{DBLP:journals/jfp/JungKJBBD18} for details.
The proposition $P \ast Q$ says that the propositions $P$ and $Q$ hold over disjoint resources; the proposition $P \wand Q$ says that if we were to add any resources which satisfy $P$, then $Q$ would be satisfied.
The proposition $\upd P$ says that the current resources can be updated to satisfy $P$.
The proposition $\Box P$ states that $P$ holds \emph{persistently}, i.e., without asserting any resources.
Crucially, such propositions can be duplicated: $\Box P \vdash \Box P \ast \Box P$.
An example of a persistent proposition is the invariant proposition $\knowInv{}{P}$, which satisfies $\knowInv{}{P} \vdash \Box \knowInv{}{P}$.

\begin{figure}[t]\small
  \begin{mathpar}
    \inferH{wp-val}
    {\alpha \in \ITv \and \Phi(\alpha)}
    {\wpre{\alpha}{\Phi}}
    \and
    \inferH{wp-tick}
    {\later  \wpre{\alpha}{\Phi}}
    {\wpre{\Tick(\alpha)}{\Phi}}
    \and
    \inferH{wp-hom}
    {f \in \Hom \and \wpre{\alpha}{\Ret \beta_v. \wpre{f(\beta_v)}{\Phi}}}
    {\wpre{f(\alpha)}{\Phi}}
    \and
    \inferH{wp-reify}
    {\hasstate(\sigma) \and
      \reify(\Vis_i(x,k), \sigma) = (\Tick(\beta), \sigma')
      \and
      \later\big(\hasstate(\sigma') \wand \wpre{\beta}{\Phi} \big)}
    {\wpre{\Vis_i(x, k)}{\Phi}}
    \and
    \inferH{wp-upd}
    {\upd \wpre{\alpha}{\Ret \alpha_v. \upd \Phi(\alpha_v)}}{\wpre{\alpha}{\Phi}}
    \and
    \inferH{wp-mono}
    {\wpre{\alpha}{\Psi} \and \All \alpha_v.\Psi(\alpha_v) \wand \Phi(\alpha_v)}
    {\wpre{\alpha}{\Phi}}
\and
\inferH{wp-lam}
  {\wpre{\beta}{\Ret \beta_v.\later \wpre{f(\beta_v)}{\Phi}}}
  {\wpre{\APPs{\Fun(\Next(f))}{\beta}}{\Phi}}
\end{mathpar}
  \caption{Selected weakest precondition calculus rules.}
  \label{fig:wpre}
\end{figure}
Selected rules for the weakest precondition proposition $\wpre{\alpha}{\Phi}$ are given in \Cref{fig:wpre}.
The rule \ruleref{wp-val} states that to verify a value it suffices to check that the value satisfies the postcondition.
The rule \ruleref{wp-tick} states that in order to verify $\Tick(\alpha)$ it suffices to verify $\alpha$, under a later $\later$.
The rule \ruleref{wp-hom} states that in order to verify $f(\alpha)$ for a homomorphism $f$, it suffices to reduce $\alpha$ to a value $\alpha_v$, and then verify $f(\alpha_v)$.

The rule \ruleref{wp-reify} tells us how to deal with effects.
The rule uses the proposition $\hasstate(\sigma)$ which signifies the \emph{exclusive ownership} of the current state $\sigma$.
The use of separation logic is crucial in this case, as we do not want to allow duplicating that proposition.
The rule then states that in order to verify an effect, one has to provide the current state $\sigma'$ and the proof that the interaction tree with the effect reifies into some $\Tick(\beta)$.
Then, the user has to verify that the resulting $\beta$ reduces to a value satisfying the postcondition, under the assumption that the state has been updated to $\sigma'$.

The rule \ruleref{wp-upd} states that one can update ghost resources before and after reducing $\alpha$.
The rule \ruleref{wp-mono} states that one can always weaken the postcondition in $\wpre{\alpha}{\Phi}$.
Finally, \ruleref{wp-lam} is an example of a derived rule.
It combines the computational rule for function application of \gitrees, and rules \ruleref{wp-hom} and \ruleref{wp-tick}.
Let us look at an example derivation using these rules.
\begin{example}
Consider a $\iolang$ expression $(\Input + 1)$.
It is interpreted as the \gitree $\Sem{\Input + 1} = \NATOP_{+}(\INPUT, \Rret(1))$, for which we can prove the following specification:
  \begin{mathpar}\small
    \infer
    {\hasstate(n\vec{n},\vec{m}) \and
    \later(\hasstate(\vec{n},\vec{m}) \wand \Phi(\Rret(n+1)))}
    {\wpre{\NATOP_{+}(\INPUT, \Rret(1))}{\Phi}}
  \end{mathpar}
\end{example}
\begin{proof}
Note that $\NATOP_{+}(-,\Rret(1))$ is a homomorphism.
We apply \ruleref{wp-hom}, reducing our goal to:
  \[
    \wpre{\INPUT}{\Ret \beta_v. \wpre{\NATOP_{+}(\beta_v, \Rret(1))}{\Phi}}.
  \]
At this point we can use the assumption $\hasstate(n\vec{n},\vec{m})$
and the rule \ruleref{wp-reify}.
By the reifier of input/output effects, ${\reify(\INPUT, (n\vec{n},\vec{m})) = \Some(\Tick(\Rret(n)), (\vec{n},\vec{m}))}$, and we get the following goal:
  \[
    \later(\hasstate(\vec{n},\vec{m}) \wand \wpre{\Rret(n)}{\Ret \beta_v. \wpre{\NATOP_{+}(\beta_v, \Rret(1))}{\Phi}}).
  \]
Recall that we still have the assumption
  $
    \later(\hasstate(\vec{n},\vec{m}) \wand \Phi(\Rret(n+1))).
  $
  By the monotonicity of $\later$ we can remove the $\later$ modality both from the goal and the assumption.
  Since $\Rret(n)$ is a \gitree value, we can use \ruleref{wp-val} and reduce the goal to
  \[
    \wpre{\NATOP_{+}(\Rret(n), \Rret(1))}{\Phi}.
  \]
  By calculation, $\NATOP_{+}(\Rret(n), \Rret(1)) = \Rret(n+1)$, which is also a \gitree value.
  We can then apply \ruleref{wp-val} again to reduce the goal to $\Phi(\Rret(n+1))$, which follows from the assumption.
\end{proof}

We define the weakest precondition as a guarded recursive predicate, as is standard in Iris.
The weakest precondition then satisfies the following \emph{adequacy} and \emph{safety} theorem, the proof of which relies on the adequacy of Iris (\Cref{thm:iris_adequacy}).
\begin{theorem}
  \label{thm:wp_adequacy}
  Let $\alpha$ be an interaction tree and $\sigma$ be a state such that
  \[\hasstate(\sigma) \vdash \wpre{\alpha}{\Phi}
  \]
  is derivable for some meta-level predicate $\Phi$ (containing only intuitionistic logic connectives).
  Then for any $\beta$ and $\sigma'$ such that
  $(\alpha, \sigma) \istep^{\ast} (\beta,\sigma')$, one of the following two things hold:
  \begin{itemize}
  \item  \emph{(adequacy)} either $\beta \in \ITv$, and $\Phi(\beta)$ holds in the meta-logic;
  \item \emph{(safety)} or there are $\beta_1$ and $\sigma_1$ such that
    $(\beta,\sigma')\istep(\beta_1,\sigma_1)$
  \end{itemize}
  In particular, safety implies that $\beta \neq \Err(e)$ for any error $e \in \Error$.\footnote{While the weakest precondition that we presented in this section disallow any errors in guarded interaction tree, we will consider in \Cref{sec:comb_safety} that allows \emph{some} errors, at the user's discretion.
}
\end{theorem}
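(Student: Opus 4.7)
The plan is to lift the Iris-internal weakest precondition statement to a meta-level statement via \Cref{thm:iris_adequacy}, after first establishing an $n$-indexed adequacy lemma in Iris by induction on the reduction length.

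First, I would unfold the guarded-recursive definition of $\wpre{\alpha}{\Phi}$ to obtain a one-step characterization of the form
\[
  \wpre{\alpha}{\Phi} \ast \hasstate(\sigma) \vdash \upd\Big(\big(\alpha \in \ITv \ast \Phi(\alpha)\big) \lor \big(\text{reducible}(\alpha,\sigma) \ast \All \beta, \sigma'. (\alpha,\sigma) \istep (\beta, \sigma') \wand \later \upd(\hasstate(\sigma') \ast \wpre{\beta}{\Phi})\big)\Big),
\]
where ``reducible'' means the existence of some $\istep$-successor. This follows directly from unfolding the fixed point, noting that \ruleref{wp-val} corresponds to the first disjunct and the remaining rules (\ruleref{wp-tick}, \ruleref{wp-hom}, \ruleref{wp-reify}) all produce an interaction tree that is reducible in the sense of $\istep$, along with the appropriate later-guarded continuation. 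The update modality is threaded through to accommodate \ruleref{wp-upd}.

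Next, I would prove by induction on $n$ the $n$-step variant: for all $n, \beta, \sigma'$, if $(\alpha, \sigma) \istep^n (\beta, \sigma')$ then
\[
  \hasstate(\sigma) \ast \wpre{\alpha}{\Phi} \vdash (\later)^n \upd \Big(\big(\beta \in \ITv \ast \Phi(\beta)\big) \lor \Exists \beta_1, \sigma_1. (\beta,\sigma') \istep (\beta_1, \sigma_1)\Big).
\]
The base case $n=0$ is immediate from the unfolding. For the inductive step, we have $(\alpha,\sigma) \istep (\alpha_0, \sigma_0) \istep^n (\beta, \sigma')$; the unfolding discharges the value case (which is incompatible with $\alpha$ reducing) and otherwise hands us $\later \upd(\hasstate(\sigma_0) \ast \wpre{\alpha_0}{\Phi})$ after instantiating with the known reduction, to which we apply the induction hypothesis under the $\later$.

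Finally, since $\Phi$ and the existence of a successor configuration are both pure intuitionistic meta-level propositions, I would apply \Cref{thm:iris_adequacy} to discharge both the $(\later)^n$ prefix and the $\upd$ modality and conclude the disjunction at the meta-level; the last clause ($\beta \neq \Err(e)$) follows because $\Err(e)$ is neither in $\ITv$ nor reducible under $\istep$. The main obstacle I anticipate is the careful handling of the ghost-state management needed to interpret $\hasstate(\sigma)$ so that Iris adequacy applies: concretely, I need to ensure that the initial resource assertion $\hasstate(\sigma)$ can be allocated from the trivial context, so that the hypothesis $\hasstate(\sigma) \vdash \wpre{\alpha}{\Phi}$ combines with the $n$-step lemma to yield $\TRUE \vdash (\later)^n \upd(\dots)$, which is the exact shape required by \Cref{thm:iris_adequacy}. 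The rest is bookkeeping around the later and update modalities, which is standard Iris machinery.
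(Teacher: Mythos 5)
Your proposal is correct and matches the approach the paper takes (the paper only sketches it, deferring to the Coq formalization): unfold the guarded-recursive definition of the weakest precondition into a one-step value/reducible disjunction, prove an $n$-indexed lemma by induction on the length of the reduction, and then extract the meta-level statement via the soundness of Iris. The only point worth flagging is that \Cref{thm:iris_adequacy} as stated handles only pure intuitionistic propositions, so you do indeed need the standard strengthening of Iris soundness that additionally strips a $(\later)^n$ prefix and the update modalities (including the initial allocation of $\hasstate(\sigma)$) --- which you correctly identify as the remaining bookkeeping.
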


Finally, it is worth noting that separation logic/Iris is useful for reasoning about higher-order \gitrees even in the absence of effects, as demonstrated by the following example.
\begin{example}
  \newcommand{\ITER}{\mathsf{Iter}}
  Using guarded recursion, we can write down a \gitree $\ITER$ that satisfies the equation:
  $$
  \APPs{\APPs{\APPs{\ITER}{f}}{\alpha}}{\beta} = \IF\big(\alpha, \APPs{f}{(\APPs{\APPs{\APPs{\ITER}{f}}{\NATOP_{-}{}(\alpha,\Rret(1))}}{\beta})}, \beta\big).
  $$
  That is, $\APPs{\APPs{\APPs{\ITER}{f}}{\Rret(n)}}{\beta}$ computes the iterated application $\APPs{f^{n}}{\beta}$.
  We can give $\ITER$ the following higher-order specification:
  $$
  \infer
  {\wpre{\beta}{\Psi} \and \Box \All \beta_{v}. \Psi(\beta_{v}) \wand \wpre{\APPs{f}{\beta_{v}}}{\Psi}}
  {\wpre{\big(\APPs{\APPs{\APPs{\ITER}{f}}{\Rret(n)}}{\beta}\big)}{\Psi}}
  $$
  The specification says that if $\beta$ initially satisfies $\Psi$ and $f$ preserves $\Psi$ , then
  $\APPs{\APPs{\APPs{\ITER}{f}}{\Rret(n)}}{\beta}$ will also satisfy $\Psi$.
  The second premise is the specification of $f$, and it can be used multiple times in the proof.
  For that reason the that premise is behind the persistently modality $\Box$.

  It is also worth noting that while $\ITER$ itself does not use state, the function $f$ that we supply to it might as well use all sorts of effects internally, and our implementation and specification of $\ITER$ is oblivious to that.
\end{example}

\subsection{Domain-Specific Logic for Higher-Order Store}
\label{sec:store_logic}
Now we show how we can use the standard mechanisms in Iris to recover a fairly standard-looking separation logic for a programming language with references from the weakest precondition calculus presented above.
We use Iris's notion of \emph{higher-order ghost state} \cite{DBLP:conf/icfp/0002KBD16,DBLP:journals/jfp/JungKJBBD18} to provide the following logical interface for the higher-order store operations:
\begin{mathpar}\small
\inferH{wp-alloc}
{\heapctx \and \later \All \ell. \ell \mapsto \alpha \wand \wpre{k\ \ell}{\Phi}}
{\wpre{\ALLOC(\alpha, k)}{\Phi}}
\and
\inferH{wp-read}
{\heapctx \and \later \ell \mapsto \alpha \and
\later \big(\ell \mapsto \alpha \wand \wpre{\alpha}{\Phi}\big)}
{\wpre{\READ(\ell)}{\Phi}}
\and
\inferH{wp-write}
{\heapctx \and \later \ell \mapsto \alpha \and
\later \big(\ell \mapsto \beta \wand \Phi(\Rret(()))\big)}
{\wpre{\WRITE(\ell,\beta)}{\Phi}}
\and
\inferH{wp-dealloc}
{\heapctx \and \later \ell \mapsto \alpha \and
\later \Phi(\Rret(()))}
{\wpre{\DEALLOC(\ell)}{\Phi}}
\and
\infer{}{\heapctx\vdash \Box\heapctx}
\end{mathpar}
Here $\heapctx$ is a persistent proposition, which is part of the logical interface.

Thus our goal is to provide definitions of $\heapctx$ and $\ell \mapsto \alpha$ that allow us to derive the rules above.
The main challenge is that the resource $\hasstate(\sigma)$ provides a singular complete view of the state, without the ability to split it into local portions corresponding to individual locations.
That $\hasstate$ is not splittable by itself is not surprising -- it is an abstract representation of an arbitrary state for arbitrary effects, and there is no a priori way of splitting it.
However, for our specific state (a heap $\Loc \fpfn \latert \IT$) we know how to do the splitting.
What we need to do is to provide an alternative view of the state, amenable to splitting, and tie it together with the actual state of the $\hasstate$ predicate.

Our first step is then to provide a \emph{resource algebra} for this view of the state.
Following the standard practice of Iris, we use an authoritative resource algebra of the heap.
It contains two kinds of resources: the ``full heap'' $\authfull \sigma$ and the ``fragmental heap'' $\authfrag \sigma'$.
The fragmental heap is guaranteed to be a subheap of the full one $\sigma' \subseteq \sigma$.
Then we make the following definitions:
\begin{align*}
  \heapctx \eqdef{}& \knowInv{}{\Exists \sigma. \hasstate(\sigma) \ast \ownGhost{}{\authfull \sigma}}
&   \ell \mapsto \alpha \eqdef{}& \ownGhost{}{\authfrag [\ell \mapsto \Next(\alpha)]}
\end{align*}
The $\heapctx$ predicate is an invariant that says that the full view of the heap coincides with the actual state that we have as part of $\hasstate$, and the points-to predicate $\ell \mapsto \alpha$ states that $[\ell \mapsto \Next(\alpha)]$ is in the fragmental view of the heap.
Together those predicates imply that the actual state $\sigma$ maps $\ell$ to $\Next(\alpha)$, which is precisely what allows us to deduce the rules \ruleref{wp-alloc}, \ruleref{wp-read} and \ruleref{wp-write} from the rule \ruleref{wp-reify}.

As a simple example, we can use the rules for the store operations to verify the factorial program from \Cref{ex:factorial}.
We show the following specification: $\wpre{\fact(n)}{\Ret \beta_v. \beta_v = \Rret(!n)}$.
For this, we will use the intermediate lemma:
\begin{lemma}
  \label{lem:factbody}
  Under the assumptions $\heapctx$, $\acc \mapsto \Rret(m)$ and $\ell \mapsto \Rret(n)$, we have
$$\wpre{\factbody(\acc,\ell)}{\Ret \_. \acc \mapsto \Rret(m \times !n)}.$$
\end{lemma}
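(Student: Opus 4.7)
The plan is to prove the lemma by meta-level induction on $n$, using at each step the unfolding equation
\[
  \WHILE{\alpha}{\beta} = \IF\bigl(\alpha,\ \beta \SEQ \Tick(\WHILE{\alpha}{\beta}),\ \Nat(0)\bigr)
\]
together with the store rules from \Cref{sec:store_logic}. The persistent assumption $\heapctx$ is carried along silently.

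For the base case $n = 0$, I would unfold $\WHILE$, peel off the homomorphism $\IF(-, \ldots, \Nat(0))$ with \ruleref{wp-hom}, and then apply \ruleref{wp-read} on $\ell \mapsto \Nat(0)$ to reduce the condition to $\Nat(0)$. The $\IF$ rule then picks the $\Nat(0)$ branch, so the goal becomes $\wpre{\Nat(0)}{\Phi}$ and closes by \ruleref{wp-val}, since $m \times\, !0 = m$ and we still own $\acc \mapsto \Nat(m)$.

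For the inductive case $n = n' + 1$, the same unfolding plus \ruleref{wp-read} reduces the $\IF$ test to the positive branch. I would then walk through the body left-to-right: each $\LET x = \alpha IN \beta(x)$ is $\getval(\alpha, \Lam x.\beta(x))$, and $\getval(-, f)$ is a homomorphism, so \ruleref{wp-hom} lets us first evaluate the $\READ$s and $\NATOP$s to obtain $i = \Nat(n'+1)$, $r = \Nat(m(n'+1))$, and the decremented $i = \Nat(n')$. The two $\WRITE$s, reasoned about via \ruleref{wp-write} (using that $(-) \SEQ \Tick(\WHILE\ldots)$ is a homomorphism), update the heap to $\acc \mapsto \Nat(m(n'+1))$ and $\ell \mapsto \Nat(n')$ and return $\Nat(0)$, leaving the residual goal $\wpre{\Tick(\WHILE\ldots)}{\Phi}$. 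A use of \ruleref{wp-tick} strips a $\later$, and the meta-level inductive hypothesis instantiated with $m \mapsto m(n'+1)$ and $n \mapsto n'$ yields $\acc \mapsto \Nat(m(n'+1)\cdot\, !n') = \acc \mapsto \Nat(m \cdot\, !(n'+1))$, as required.

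The main obstacle is purely organisational: the body is a nested cascade of $\LET$s on top of a $\SEQ$ over $\Tick(\WHILE\ldots)$, so the proof requires a careful sequence of \ruleref{wp-hom} applications to peel off the homomorphic contexts at each stage, interleaved with \ruleref{wp-read}, \ruleref{wp-write}, and the computational rules of \Cref{fig:lambda_ops}. The only genuinely semantic step is the appeal to the meta-level inductive hypothesis after \ruleref{wp-tick} consumes the recursive $\Tick$; this goes through because the points-to assertions are timeless and the $\Tick$ provides exactly the $\later$ needed, so no explicit L\"ob induction is required.
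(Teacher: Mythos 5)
The paper states this lemma without giving a textual proof (it is discharged in the Coq development and only invoked in the proof of \Cref{prop:factorial_spec}), so there is no in-paper argument to compare against; your proof is correct and complete as an argument. Two remarks on the route you chose. First, meta-level induction on $n$ is sound here precisely for the reason you give: the loop counter strictly decreases, each iteration re-establishes the hypotheses of the lemma with $m \mapsto m(n'+1)$ and $n \mapsto n'$, and the recursive occurrence sits under a $\Tick$, so \ruleref{wp-tick} plus monotonicity of $\later$ lets you invoke the meta-level IH without L\"ob. The more idiomatic alternative in this Iris setting --- and the one the guarded-recursive definition of $\WHILE{\alpha}{\beta}$ most directly suggests --- is a single L\"ob induction over the statement universally quantified over both $m$ and $n$; the two are interchangeable here, and your version has the small advantage of not needing to phrase the generalized L\"ob invariant. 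Second, be aware that \Cref{fig:lambda_ops} as literally printed has $\IF(\Nat(0),\alpha_1,\alpha_2) = \alpha_1$ and $\IF(\Nat(n),\alpha_1,\alpha_2) = \alpha_2$ for $n>0$, which, combined with the stated unfolding of $\WHILE{\alpha}{\beta}$, would make the loop \emph{exit} on a nonzero condition; your proof silently uses the intended convention (nonzero selects the body, zero selects $\Nat(0)$), which is the only reading under which $\factbody$ computes a factorial, so this is a typo in the figure rather than a gap in your argument. The remaining steps --- peeling $\LET$/$\SEQ$ contexts with \ruleref{wp-hom}, using that the continuation of \ruleref{wp-write} is already under a $\later$, and the arithmetic $m(n'+1)\cdot\,!n' = m\cdot\,!(n'+1)$ --- are all as they should be.
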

\begin{proposition}
\label{prop:factorial_spec}
  Under the assumption $\heapctx$ we have 
$$\wpre{\fact(n)}{\Ret \beta_v. \beta_v=\Rret(!n)}.$$
\end{proposition}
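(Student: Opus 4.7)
The plan is to unfold $\fact(n)$ and work through the two allocations from the outside in, then invoke \Cref{lem:factbody} for the loop body, and finally read off the result. Concretely, I would start by applying \ruleref{wp-alloc} to the outer $\ALLOC(\Nat(1), \dots)$ using the $\heapctx$ assumption; this strips the allocation and produces a fresh $\acc$ with $\acc \mapsto \Nat(1)$ under a later. I would then apply \ruleref{wp-alloc} again to the inner $\ALLOC(\Nat(n), \dots)$, obtaining a fresh $\ell$ with $\ell \mapsto \Nat(n)$. At this point the remaining goal is
\[
  \wpre{\factbody(\acc,\ell) \SEQ \READ(\acc)}{\Ret \beta_v.\ \beta_v = \Nat(!n)},
\]
with both points-to assertions and $\heapctx$ in context.

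Next, since $(-) \SEQ \READ(\acc)$ is a homomorphism (as noted in the discussion following \Cref{fig:lambda_ops}), I would apply \ruleref{wp-hom}, reducing the goal to $\wpre{\factbody(\acc,\ell)}{\Ret \beta_v.\ \wpre{\beta_v \SEQ \READ(\acc)}{\Phi}}$, where $\Phi(\gamma_v) \eqdef \gamma_v = \Nat(!n)$. I would now invoke \Cref{lem:factbody} instantiated with $m = 1$, which consumes $\heapctx$, $\acc \mapsto \Nat(1)$ and $\ell \mapsto \Nat(n)$, and delivers the postcondition $\acc \mapsto \Nat(1 \times !n) = \acc \mapsto \Nat(!n)$ on some returned value $\beta_v$. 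Combined with \ruleref{wp-mono}, this reduces the goal to showing $\wpre{\beta_v \SEQ \READ(\acc)}{\Phi}$ under the assumption $\acc \mapsto \Nat(!n)$. Using the computational rule $\beta_v \SEQ \alpha = \alpha$ for \gitree values, this simplifies to $\wpre{\READ(\acc)}{\Phi}$, at which point \ruleref{wp-read} and then \ruleref{wp-val} close the proof.

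The only subtlety I anticipate is bookkeeping around the $\later$ modalities introduced by \ruleref{wp-alloc} and \ruleref{wp-read}, and ensuring that $\heapctx$ (being persistent) is available at each application of a store rule; both are handled routinely in Iris. The genuinely substantive content is packaged into \Cref{lem:factbody}, which is proved separately by L{\"o}b induction on the loop; once that lemma is in hand, the present proposition is essentially a matter of chaining the structural weakest-precondition rules together with the arithmetic identity $1 \times !n = !n$.
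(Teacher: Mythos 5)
Your proposal is correct and follows the same route as the paper's (very terse) proof: allocate $\acc$ and $\ell$ symbolically via \ruleref{wp-alloc}, then appeal to \Cref{lem:factbody} with $m=1$. The additional detail you supply --- using \ruleref{wp-hom} for the $(-)\SEQ\READ(\acc)$ context, the computational rule $\beta_v \SEQ \alpha = \alpha$, and closing with \ruleref{wp-read} and \ruleref{wp-val} --- is exactly the bookkeeping the paper elides.
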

\begin{proof}
  We proceed by allocating the locations $\acc$ and $\ell$ symbolically using \ruleref{wp-alloc}, and then appeal to \Cref{lem:factbody}.
\end{proof}

As one can see, the logic that we recovered for the higher-order store effects is very close to a normal separation logic one would normally see for a programming language with a heap \cite{DBLP:journals/jfp/JungKJBBD18}.
Our logic, however, is amenable to extensions with other effects and programming language constructs.
Indeed, we explain how to obtain a logic for reasoning about different combined effects in \Cref{sec:subeffects}.
In the next section we show how to apply the separation logic to show computational adequacy of the model of $\iolang$.

\section{Computational Adequacy for $\iolang$}
\label{sec:adequacy}
In \Cref{sec:iolang_model} we constructed a compositional model of $\iolang$ in guarded interaction trees and proved that it is sound: if a $\iolang$ program $\expr$ terminates to a natural number $n$, then $\Sem{\expr}$ terminates to $\Rret(n)$.
In this section we show the other direction, known as computational adequacy in domain theory \cite{Plotkin:1977}, for the well-typed fragment of $\iolang$; the typing relation ($\Gamma \proves \expr : \type$) is given in \Cref{fig:iolang_typing}.
Computational adequacy is formally stated as the following theorem:
\begin{theorem}[Adequacy]
If $\ \vdash \expr : \Tnat$ and $(\Sem{\expr}, \sigma) \istep^{\ast}(\Rret(n),\sigma')$ then $(\expr,\sigma)\step^{\ast}(n,\sigma')$.
\end{theorem}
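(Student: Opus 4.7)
The plan is to follow the classical logical relations approach to computational adequacy, but formulated in the Iris separation logic so that we can leverage the weakest precondition calculus and its adequacy theorem. The overall structure is: (1) define a type-indexed logical relation between \gitree values/expressions and $\iolang$ values/expressions, (2) prove a fundamental lemma stating that every well-typed term is related to its denotation, and (3) conclude adequacy via \Cref{thm:wp_adequacy}.

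More concretely, I would define two mutually recursive Iris predicates $V_\type$ and $E_\type$ by recursion on types. At base type, $V_\Tnat(\alpha_v, \val) \eqdef \exists n.\ \alpha_v = \Nat(n) \wedge \val = n$. At function type, $V_{\type_1 \to \type_2}(\alpha_v, \val)$ says that for all related argument pairs $(\alpha'_v, \val')$, the applications $\APPs{\alpha_v}{\alpha'_v}$ and $\val\,\val'$ lie in $E_{\type_2}$, with the recursive occurrence guarded by a $\later$ so that the definition is well-formed as a guarded fixed point. The expression relation uses WP to simultaneously track the state:
\[
  E_\type(\alpha, \expr) \eqdef \forall \sigma.\ \hasstate(\sigma) \wand \wpre{\alpha}{\Ret \alpha_v.\ \exists \val, \sigma'.\ (\expr, \sigma) \step^{\ast} (\val, \sigma') \ast \hasstate(\sigma') \ast V_\type(\alpha_v, \val)}.
\]
Since both reductions use the same state domain (input/output tape pairs) and the reifier from \Cref{ex:io_reify} mirrors \ruleref{red-input}/\ruleref{red-output} step for step, reusing $\hasstate$ on both sides serves as the tying invariant.

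Next I would prove the fundamental lemma by induction on the typing derivation: for any substitution $(\rho, \gamma)$ related pointwise at the types in $\Gamma$, we have $E_\type(\Sem{\expr}_\rho, \gamma(\expr))$. The arithmetic, conditional, and I/O cases follow from \ruleref{wp-hom}, \ruleref{wp-val}, and \ruleref{wp-reify} combined with the appropriate operational step, using the fact that the interpretations of evaluation contexts are \gitree homomorphisms to discharge the bind-style obligations arising from compound subterms. The application case uses \ruleref{wp-lam} together with the homomorphism property of $\APPs{-}{-}$ in each argument to mirror the call-by-value right-to-left evaluation order. The recursive function case is handled by L\"ob induction: after unfolding $\Sem{\Rec f x = \expr}_\rho$ to $\Fun(\Next(\cdots))$, the inductive hypothesis provided by L\"ob furnishes exactly the $\later V_{\type_1 \to \type_2}$ value needed to feed the self-reference $f$ back into $\rho$, and the $\Tick$ produced by \ruleref{wp-lam} lines up with the single operational step of \ruleref{red-beta}.

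Adequacy then follows directly. For a closed $\vdash \expr : \Tnat$, the fundamental lemma yields $E_\Tnat(\Sem{\expr}, \expr)$; instantiating at the initial state $\sigma$ and using \ruleref{wp-mono} to discard the $\hasstate(\sigma')$ resource from the postcondition yields $\hasstate(\sigma) \vdash \wpre{\Sem{\expr}}{\Ret \alpha_v.\ \exists n, \sigma''.\ \alpha_v = \Nat(n) \wedge (\expr, \sigma) \step^{\ast} (n, \sigma'')}$, which now has a purely meta-level postcondition. Applying \Cref{thm:wp_adequacy} to the hypothesis $(\Sem{\expr}, \sigma) \istep^{\ast} (\Nat(n), \sigma')$ gives $(\expr, \sigma) \step^{\ast} (n, \sigma'')$ for some $\sigma''$; a short determinism argument comparing the $\iolang$ operational semantics to the reifier shows $\sigma'' = \sigma'$. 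The main difficulty I expect is setting up the logical relation so that the later modalities interact correctly: the guarded definition of $V_{\type_1 \to \type_2}$ must align both with the $\later$ produced by \ruleref{wp-lam} when stepping function applications and with the L\"ob hypothesis used in the recursive function case, and the bind-style reasoning through homomorphic evaluation contexts must carry through uniformly for every compound expression form.
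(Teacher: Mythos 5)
Your overall strategy coincides with the paper's: a type-indexed logical relation whose expression part wraps a weakest precondition that threads $\hasstate$ and asserts the existence of a matching operational reduction $(\expr,\sigma)\step^{\ast}(v,\sigma')$, a fundamental lemma proved by induction on typing (with L\"ob induction for $\Rec f \var = \expr$), and a final appeal to \Cref{thm:wp_adequacy}. Your endgame differs slightly --- you weaken away $\hasstate(\sigma')$ with \ruleref{wp-mono} and recover the final state by a determinism argument, whereas the paper keeps the state in the postcondition --- but that route is sound given the adequacy theorem as stated.

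There is, however, one concrete defect in your definition of the value relation at function types. In a non-affine language a value of type $\type_1 \to \type_2$ may be used many times (consider $f : \Tnat \to \Tnat \vdash f\ (f\ 0) : \Tnat$), so the relatedness assumption supplied by the closing substitution must be \emph{duplicable}; the paper therefore places a persistence modality $\Box$ around the body of $\VV{\type_1\to\type_2}$. As you have written it, $V_{\type_1\to\type_2}$ is a plain $\forall$/$\wand$ over the WP-based expression relation, which is not persistent in Iris, and the fundamental lemma would fail exactly at variables of function type occurring more than once (the context interpretation cannot be split to serve both occurrences). Relatedly, the $\later$ guard and ``guarded fixed point'' framing are unnecessary here: the recursive occurrence of $E_{\type_2}$ sits at a structurally smaller type, so the relation is defined by ordinary induction on types; guardedness enters only through L\"ob induction in the $\Rec$ case of the fundamental lemma, not in the definition of the relation itself. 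Adding the spurious $\later$ is probably harmless (the $\Tick$ from \ruleref{wp-lam} supplies a matching $\later$), but omitting the $\Box$ is not.
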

Computational adequacy is usually proved using logical relations between the syntax (terms of $\iolang$ in our case) and semantics (guarded interaction trees in our case).
Here we follow the recent practice \cite{DBLP:conf/popl/KrebbersTB17} of using the separation logic (see \Cref{sec:program_logic}) to define our logical relations model.
\begin{figure}[t]\small
  \begin{mathpar}
    \axiom{\var : \type ,\Gamma \proves \var : \type}
    \and
    \axiom{\Gamma \vdash n : \Tnat}
    \and
    \infer{\Gamma \vdash \expr_1 : \Tnat
      \and \Gamma \vdash \expr_2 : \Tnat \and \oplus \in \{-,+\}}
    {\Gamma \vdash\expr_1 \oplus \expr_2 : N}
    \and
    \infer
    {\Gamma \proves \expr_1 : \type_1 \to \type_2
      \and \Gamma \proves \expr_2 : \type_1}
    {\Gamma \proves \expr_1\ \expr_2 : \type_2}
    \and
    \infer{f : \type_1 \to \type_2, x : \type_1 , \Gamma \proves
      \expr : \type_1 \to \type_2
    }
    {\Gamma \proves \Rec f x = \expr : \type_1 \to \type_2}
    \and
    \infer{\Gamma \vdash \expr : \Tnat
    \and \Gamma \vdash \expr_1 : \type
      \and \Gamma \vdash \expr_2 : \type}
    {\Gamma \vdash \If \expr then \expr_1 \Else \expr_2 : \type}
    \and
    \axiom
    {\Gamma \vdash \Input : \Tnat}
    \and
    \infer{\Gamma \vdash e :\Tnat}
    {\Gamma \vdash \Output(e) :\Tnat}
  \end{mathpar}
  \caption{Typing rules for $\iolang$.}
  \label{fig:iolang_typing}
\end{figure}

We define a logical relation $\Gamma \models \alpha \refines \expr : \type$, relating a guarded interaction tree $\alpha$ and an expression $\expr$.
Here, $\expr$ is an open expression for which we have $\Gamma \proves \expr : \type$ while $\alpha$ is ``an open interaction tree'', i.e., a function of type $(\fv(\Gamma) \to \IT) \to \IT$.
As usual, we first define the relation over closed \gitrees and expressions, and then generalize it to the open case.
The logical relation, given in \Cref{fig:iolang_logrel}, is, as usual for call-by-value languages, is decomposed into an expression relation $\EE{\type}$ and a value relation $\VV{\type}$.
The expression relation simply states that related expressions should produce related values.
The value relation is defined by induction on the type in the standard way: values of base types should be equal while functions take related values to related expressions.
As values, once computed, can be used multiple times (cf.\ the logical relation in \Cref{sec:afflang_model}) the value relation is required to be persistent; hence the persistently modality $\Box$ in the value relation for functions.
In order to define the relation on open terms we define a relation for typing contexts $\VVCtx{\Gamma}$ which relates, point-wise, two substitutions respectively of the types $\fv(\Gamma) \to \ITv$ and $\fv(\Gamma) \to \Val$.
\begin{figure}[t]\small
  \begin{align*}
    \EE{\type}(\alpha,\expr) \eqdef{}& 
      \All \sigma. \hasstate(\sigma)\wand{} \wpre{\alpha}{\Ret \beta_v.
        \Exists v,\, \sigma'. (\expr,\sigma)\step^{\ast}(v,\sigma') \ast \VV{\type}(\beta_v, v) \ast \hasstate(\sigma')}\\
    \VV{\Tnat}(\beta_v,v) \eqdef{}& \Exists n \in \mathbb{N}. \beta_v = \Rret(n) \wedge v = n\\
    \VV{\type_1 \to \type_2}(\beta_v,v) \eqdef{}& 
      \Exists f. \beta_v = \Fun(f) \wedge \Box
      \big(\All \alpha_w,\, w. \VV{\type_1}(\alpha_w, w) \wand \EE{\type_2}(\APPs{\beta_v}{\alpha_w},v\ w)\big).\\
    \VVCtx{\Gamma}(\rho_1,\rho_2) \eqdef{}&\All (\var :\type) \in \Gamma. \VV{\type}(\rho_1(x), \rho_2(x))\\
    \Gamma \models \alpha \refines \expr : \type \eqdef{}&
      \All \rho_1,\ \rho_2. \VVCtx{\Gamma}(\rho_1, \rho_2) \implies
          \EE{\type}(\alpha(\rho_1), \expr[\rho_2])
  \end{align*}
  \caption{Logical relation for $\iolang$.}
  \label{fig:iolang_logrel}
\end{figure}
\begin{lemma}[Fundamental property]
  For any $\Gamma \vdash \expr : \type$, we have
  $\Gamma \models (\Lam \rho.\Sem{\expr}_{\rho}) \refines \expr : \type$.
\end{lemma}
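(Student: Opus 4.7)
The plan is to proceed by induction on the typing derivation $\Gamma \vdash \expr : \type$. After fixing two related substitutions $\rho_1, \rho_2$ with $\VVCtx{\Gamma}(\rho_1, \rho_2)$, the goal is to establish $\EE{\type}(\Sem{\expr}_{\rho_1}, \expr[\rho_2])$, i.e.~to produce a weakest precondition for $\Sem{\expr}_{\rho_1}$ whose postcondition threads the state via $\hasstate(\sigma)$ and records that $\expr[\rho_2]$ reduces operationally to a related value. The variable case follows directly from $\VVCtx{\Gamma}$ together with \ruleref{wp-val}, and the numeric constant case from \ruleref{wp-val} with the empty operational reduction.

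For the compositional cases (arithmetic operations, conditionals, $\Output$, application) I exploit the fact that each syntactic evaluation context from $\Ectx$ is denoted by a \gitree homomorphism, using the same auxiliary interpretation $\Sem{\elctx}_\rho$ introduced for the soundness proof. Consequently I can apply \ruleref{wp-hom} to peel off each subexpression in the right-to-left order dictated by $\Ectx$, use the induction hypothesis on each subterm, and on the operational side prepend the corresponding $\step$-steps via \ruleref{red-ectx}. For $\Input$ and $\Output$ I use \ruleref{wp-reify} with the reifier of \Cref{ex:io_reify}, matching its effect one-to-one with \ruleref{red-input}/\ruleref{red-output} while consuming and re-producing $\hasstate$. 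Once both arguments of an application have been reduced to related values $\Fun(\Next g)$ and $\Rec f x = \expr'$, the rule \ruleref{wp-lam} emits exactly the $\Tick$ needed to simulate \ruleref{red-beta}, and the substitution lemma rewrites the body so that the remaining obligation is discharged by the persistent clause packaged in $\VV{\type_1 \to \type_2}$, instantiated with the argument just obtained.

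The main obstacle will be the $\Rec f x = \expr$ case, since $\Sem{\Rec f x = \expr}_{\rho_1}$ is a guarded fixed point and I must simultaneously show that this fixed point lies in $\VV{\type_1 \to \type_2}$. My plan is L\"ob induction: assume $\later\VV{\type_1 \to \type_2}(\Sem{\Rec f x = \expr}_{\rho_1}, \Rec f x = \expr[\rho_2])$, unfold the fixed point via the equation $\Sem{\Rec f x = \expr}_{\rho_1} = \Fun(\Next(\Lam v.\Sem{\expr}_{\rho_1[x\mapsto v][f\mapsto \Sem{\Rec f x = \expr}_{\rho_1}]}))$, and, given a value-related pair $(\alpha_w, w)$, apply \ruleref{wp-lam} so that the emitted $\Tick$ (together with the $\later$ appearing in its premise) is consumed by the $\step$-step coming from \ruleref{red-beta}. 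Stripping $\later$ then lets me combine the L\"ob hypothesis with the outer induction hypothesis on $\expr$, applied to the extended context $f:\type_1 \to \type_2, x:\type_1, \Gamma$ and the substitutions $\rho_1[x\mapsto \alpha_w][f\mapsto \Sem{\Rec f x = \expr}_{\rho_1}]$ and $\rho_2[x\mapsto w][f\mapsto \Rec f x = \expr[\rho_2]]$; persistence of the value relation keeps $\VVCtx{\Gamma}(\rho_1,\rho_2)$ intact under this extension, and the substitution lemma aligns the syntactic side with the semantic one.
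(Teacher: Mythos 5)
Your proposal matches the paper's proof: the fundamental property is established by induction on the typing derivation, organized as one compatibility lemma per typing rule, using \ruleref{wp-hom} for the evaluation-context structure of compound expressions, \ruleref{wp-reify} for the I/O operations, and L\"ob induction together with the substitution lemma for the $\Rec f x = \expr$ case. One minor slip that does not affect the argument: in the application case the $\beta$-step is not re-derived via \ruleref{wp-lam} and the substitution lemma --- it is already packaged in the clause of $\VV{\type_1 \to \type_2}$ obtained from the induction hypothesis, and that clause is established (exactly as you describe) in the $\Rec f x = \expr$ case.
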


Computational adequacy follows from the fundamental property together with the following Lemma which itself is a consequence of the soundness of the weakest precondition calculus (\Cref{thm:wp_adequacy}):
\begin{lemma}
  Suppose that $\ \models \alpha \refines \expr : \Tnat$.
  Then for any state $\sigma$,
  \begin{itemize}
  \item $if (\alpha,\sigma)\istep^{\ast}(\Rret(n),\sigma')$, then $(\expr,\sigma)\step^{\ast}(n,\sigma')$;
  \item if $(\alpha,\sigma)\istep^{\ast}(\beta,\sigma')$, then $\beta \neq \bot$.
  \end{itemize}
\end{lemma}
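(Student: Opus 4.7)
The plan is to reduce the lemma to an application of the adequacy/safety theorem for the weakest precondition (\Cref{thm:wp_adequacy}), by first unfolding the logical relation to extract an Iris weakest-precondition judgement about $\alpha$. Specialising $\models \alpha \refines \expr : \Tnat$ to the empty substitutions yields $\EE{\Tnat}(\alpha,\expr)$, which when unfolded at the given initial state $\sigma$ gives the Iris entailment
\[
  \hasstate(\sigma) \proves \wpre{\alpha}{\Ret \beta_v.\, \Exists v,\sigma''.\, (\expr,\sigma)\step^{\ast}(v,\sigma'') \ast \VV{\Tnat}(\beta_v, v) \ast \hasstate(\sigma'')}.
\]
Because $\VV{\Tnat}(\beta_v,v)$ is equivalent to $\Exists n\in \mathbb{N}.\, \beta_v = \Nat(n) \wedge v = n$, I would use \ruleref{wp-mono} to rewrite the postcondition into the simpler form $\Exists n,\sigma''.\, \beta_v = \Nat(n) \wedge (\expr,\sigma)\step^{\ast}(n,\sigma'') \ast \hasstate(\sigma'')$, whose only non-pure ingredient is the trailing $\hasstate(\sigma'')$.

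For the first bullet, I apply \Cref{thm:wp_adequacy} to the hypothesised reduction $(\alpha,\sigma)\istep^{\ast}(\Nat(n),\sigma')$. Since $\Nat(n) \in \ITv$, the adequacy clause fires and we read off from the rewritten postcondition that $(\expr,\sigma)\step^{\ast}(n,\sigma')$. For the second bullet, suppose $(\alpha,\sigma)\istep^{\ast}(\beta,\sigma')$. The theorem guarantees that $\beta$ is either a value or can reduce further, and both possibilities preclude $\beta = \Err(e)$, since errors are stuck and are not in $\ITv$.

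The main obstacle is matching terminal states in the first bullet. As stated, \Cref{thm:wp_adequacy} hands us a pure postcondition $\Phi(\beta)$ about the terminal \gitree only, so the existentially bound $\sigma''$ coming from the postcondition does not a priori equal the $\sigma'$ appearing in the $\istep$-reduction. I would resolve this either by a mild strengthening of \Cref{thm:wp_adequacy} that threads the terminal state through the adequacy argument (a standard move in Iris developments), or by appealing to the uniqueness of $\hasstate$ so that the $\hasstate(\sigma'')$ kept in the postcondition must coincide with the actual terminal state $\sigma'$ of the reduction. Once this identification is in place, both bullets follow immediately from the two cases of \Cref{thm:wp_adequacy}.
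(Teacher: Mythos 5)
Your proposal is correct and follows essentially the same route as the paper, which simply notes that the lemma is a consequence of the fundamental property together with the adequacy/safety theorem for the weakest precondition (\Cref{thm:wp_adequacy}), obtained by unfolding $\EE{\Tnat}$ at the initial state and reading off the pure postcondition. Your observation about threading the terminal state through the adequacy statement is a legitimate bookkeeping point that the paper leaves implicit, and your proposed resolution (a state-aware strengthening of adequacy, standard in Iris developments) is the right one.
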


\section{Modular Reasoning About Combinations of Effects}
\label{sec:subeffects}
Because (guarded) interaction trees define effects abstractly, one of the main advantages is the ability to combine programs with different effects modularly in the same setting.
In this section we demonstrate how we achieve this for guarded interaction trees.

Given two signatures $E$ and $F$, with indexing sets $I$ and $J$, we say that $E$ is a subsignature of $F$, written as $\subEff{E}{F}$, if there is a mapping $\epsilon : I \to J$ such that $E.\Ins_i(X) \simeq F.\Ins_{\epsilon(i)}(X)$ and $E.\Outs_i(X) \simeq F.\Outs_{\epsilon(i)}(X)$ for any $i \in I$ and for any type $X$.
Here, $\simeq$ stands for isomorphism of types.

In regular interaction trees, a subsignature $\subEff{E}{F}$ induces an embedding $\IT_E \to \IT_F$ of interaction trees.
However, such an embedding is not possible for guarded interaction trees due to the mixed-variance definition:
a function $\IT_E \to \IT_E$ cannot be converted to a function $\IT_F \to \IT_F$ which takes a guarded interaction tree with a larger set of effects.

To achieve modularity we will instead work with an open-ended collection of effects which is large enough to embed all the effects that we need.
It is only at the ``top-level'', e.g., when applying the adequacy theorem, that we pick a concrete signature of effects.
For example, the precise type of the $\ALLOC$ function from \Cref{ex:store_sig} is the following type, for any $F$ such that $\subEff{E_{\mathit{store}}}{F}$:
\[
  \ALLOC : \IT_F \times (\Loc \to \IT_F) \to \IT_F
\]
(In practice, the function $\ALLOC$ is polymorphic in $F$ \emph{at the meta-level}, i.e., in Coq.)
With this, we can easily combine two programs with different collections of effects, assuming both of the programs are written in such an open ended manner; we just need to pick $F$ to be large enough to embed the effects of both programs.
For example, we can combine the factorial implementation from \Cref{ex:factorial} with input/output effects, to write a program that takes a natural number from the input, computes its factorial, and prints the result to the output:
$$\mathsf{fact\_io} \eqdef \getnat(\getnat(\INPUT,\fact), \OUTPUT).$$
The resulting program $\mathsf{fact\_io}$ has type $\IT_F$ for any $F$ such that $\subEff{E_{\mathit{store}}}{F}$ and $\subEff{E_{\mathit{io}}}{F}$.

\paragraph{Reifiers for modular effects}
Writing down programs with modular combinations of effects is not enough by itself: we also want to reason about the reification of effects modularly.
Suppose we write a program with effects $E$ as an \gitree $\IT_F$ with $\subEff{E}{F}$, and suppose that we have a reifier for $E$.
Recall that we defined a reifier for the effects $E$ to be a tuple $(E, \stateO, r : \prod_{\idx\in E} \Ins_{\idx}(\latert \IT_E) \times \stateO \to \optionO(\Outs_{\idx}(\latert\IT_E) \times \stateO)$.
However, if the state itself includes interaction trees, as in \Cref{ex:store_re}, we need also to make the state and the reifier parametric in the effects.
Therefore, instead of a fixed type $\stateO$ we consider a family of states $\stateO(X)$, and instead a single reifier function $r$ we consider a family of functions
\[
  \All X. \prod_{\idx \in E} \Ins_{\idx}(X) \times \stateO(X) \to \optionO(\Outs_{\idx}(X) \times \stateO(X)).
\]

In practice, we assume that the global state is the product of states of reifiers for sub-effects, in which each sub-effect acts only on its own part of the state.
Concretely, given a sequence $\vec{R} = (E_1,\stateO_1,r_1),\dots, (E_m,\stateO_m,r_m)$ of reifiers we define the \emph{global reifier} $R_G = (G, \stateO(-), r)$:
\begin{equation*}
  G.I \eqdef{} \sum_{1 \leq i \leq m} E_i.I
\qquad
  G.\Ins_{(i,j)}(X) \eqdef{} E_i.\Ins_j(X)
\qquad  G.\Outs_{(i,j)}(X) \eqdef{} E_i.\Outs_j(X)  
\end{equation*}
\vspace{-2em}
\begin{align*}
  \stateO(X) \eqdef{}& \prod_{1 \leq i \leq m} \stateO_i(X)\\
  r_{X,(i,j)}(x, (\sigma_1,\dots,\sigma_i,\dots,\sigma_m)) \eqdef{}&
\begin{cases}
  \Some(y, (\sigma_1,\dots,\sigma'_i,\dots,\sigma_m)) &\mbox{ if } r_i(x, \sigma_i) = \Some(y,\sigma')\\
  \None & \mbox{ otherwise }
\end{cases}
\end{align*}

Turning to the separation logic, we specialize the rule \ruleref{wp-reify} to the signature $G$ and the reifier $R_G$, and simplify it to
\begin{mathpar}\small
\inferH{wp-reify-local}
{\hasstate_i(\sigma_i) \and
r_i(x,\sigma_i) = \Some(y,\sigma'_i)
\and
k\ y = \Next(\beta)
\and
\later\big(\hasstate_i(\sigma'_i) \wand \wpre{\beta}{\Phi} \big)}
{\wpre{\Vis_{(i,j)}(x, k)}{\Phi}}
\end{mathpar}
where the predicate $\hasstate_i(\sigma)$ tracks the local component of the global state associated with the $i$\textsuperscript{th} reifier.
The predicates are defined to validate the following rule, which allows us to split the global state into local subcomponents and combine them back together:
$$
\hasstate(\vec\sigma) \provesIff \hasstate_1(\sigma_1)\ast \dots \ast \hasstate_m(\sigma_m).
$$

Then to write down the abstractions for the domain-specific logic in \Cref{sec:store_logic} we change the $\heapctx$ definition to link together only the state corresponding to the specific effects:
$$
\heapctx \eqdef{} \knowInv{}{\Exists \sigma. \hasstate_i(\sigma) \ast \ownGhost{}{\authfull \sigma}}
$$
where the higher-order store reifier is the $i$\textsuperscript{th} subreifier of ${\vec{R}}$.

\begin{example}
Recall the program $\mathsf{fact\_io}$ from the beginning of this section.
We use \Cref{prop:factorial_spec} 
to show the following specification:
$$
\heapctx \ast \hasstate_j(k\vec{n},\vec{m})\vdash\wpre{\mathsf{fact\_io}}{\Ret \_. \hasstate_j\left(\vec{n},(!k)\vec{m}\right)}
$$
where $\hasstate_j$ tracks the state of the input/output effects.
The specification tells us that if we run $\mathsf{fact\_io}$ with the starting state $\left(k\vec{n},\vec{m}\right)$ for the input/output effects, then we end up with the state $\left(\vec{n},(!k)\vec{m}\right)$ for the input/output effects.
\end{example}

\subsection{Modular reasoning with a generic ground type}
As we have mentioned in \Cref{sec:programming}, we often would like to work with the \gitrees $\IT_{E}(A)$ for some generic ground type $A$ that is ``large enough'' to contain ground values that we need to represent (e.g. the unit type, natural numbers, the type of locations, etc).
That is, we assume that the ground type $A$ is isomorphic to a sum $\Tunit + \Tnat + \Loc + \dots$, depending on ground values we need.
We tackle this generic ground type in a similar way we deal with different effect signatures modularly.

Specifically, we write $\subRet{B}{A}$ if $A \simeq B + C$ for some type $C$.
We then have the generic return constructor $\Rret : B \to \IT_{E}(A)$ for any $\subRet{B}{A}$.
Similarly, we have a generic ``destructor'' $\getret : \IT_{E}(A) \times (B \to \IT_{E}(A)) \to \IT_{E}(A)$ which allows us to extract a ground value of type $B$, as long as we have $\subRet{B}{A}$.
such that $\getret$ is a homomorphism in the first argument, which satisfies:
$$
\getret(\Rret(b), g) =
\begin{cases}
  g(b) \mbox{  if $b \in B$},\\
  \Err(\RunTime) \mbox{ otherwise.}
\end{cases}
$$
Then, the $\getnat$ function from \Cref{{sec:programming}} is just the specialization of $\getret$ to the situation $\subRet{\Tnat}{A}$.

The predicate  $\subRet{B}{A}$ is formalized in Coq as a typeclass, making it easy to use the generic operations like $\Rret$ and $\getret$.
In the remainder of this paper we will stick to those generalized operations, and will assume that the ground type $A$ contains all the ground values we need.

\section{Type Safety for Cross-language Interoperability}
\label{sec:interop}
One of the advantages of using \gitrees for denotational semantics is that it provides a common setting for interpreting and reasoning about different languages with different effects, and then combining the results in a modular manner.
In this section we demonstrate this point by verifying type safety of interoperability between two different languages.
The interoperability is achieved by allowing embeddings from one language into another at a particular boundary \cite{MatthewsFindler:2007}.
We take inspiration for this case study from the approach of \citet{PattersonEtAl:2022}, who consider interoperability of different languages at a level of a common third language, which both the source languages are compiled down to.
The communication between the source languages is done through \emph{glue code} at the level of the target language, which converts types from one language to another.
The type safety result then states that well-typed programs in a combined language can only go wrong due to conversion errors at the boundaries.

Specifically, we first consider an affine programming language $\afflang$ with linear references and strong updates, which we interpret in guarded interaction trees using the higher-order store effects (\Cref{ex:store_sig}).
We show the type safety of $\afflang$ by building a logical relations model.

We then consider cross-language interoperability, by allowing embedding of (higher-order) programs from the non-affine language $\iolang$ (\Cref{{sec:iolang_model}}) into $\afflang$, thus combining two languages with different type systems and different effects.
Following the approach outlined in \Cref{sec:subeffects} we \emph{reuse the standalone interpretations} of $\iolang$ and $\afflang$ to interpret the combined language in $\IT_{\mathit{store}, \mathit{io}}$.
Finally, we show type safety of this combined language, by building the logical relations model out of the models for the individual languages.

We stress that our approach here allows us to prove type safety of $\iolang$ and $\afflang$ \emph{separately}, and then prove type safety of the combined language by \emph{reusing} the logical relations for the individual languages, thus highlighting the modular nature of the \gitrees framework.

\subsection{An Affine Programming Language}
\label{sec:afflang_model}
First, we consider an affine programming language $\afflang$ with references with strong updates, and show how to interpret it in \gitrees in a way that enforces linearity.
We then consider the combination $\afflang + \iolang$, which allows us to embed $\iolang$ programs, including functions, into $\afflang$.
The syntax for the affine language $\afflang$ is as follows:
\begin{align*}
\typeAff \in \langkwa{\Type} \bnfdef{}&
\tBool \ALT \tNat \ALT \tUnit
\ALT \typeAff_1 \atensor \typeAff_2 \ALT \typeAff_1 \lolli \typeAff_2
\ALT \tRef{\typeAff}
\\
\expr \in \langkwa{\Expr} \bnfdef{}&
  n \ALT b \ALT \unittt \ALT
  \avar \ALT
\LamA \avar.\expr \ALT
\expr_1\ \expr_2 \ALT
\aPair{\expr_1}{\expr_2} \ALT
\LetA \aPair{\avar_1}{\avar_2}=\expr_1 in \expr_2 \\& \ALT \aAlloc(\expr) \ALT \aDealloc(\expr_1) \ALT
\aReplace(\expr_1,\expr_2)
\end{align*}
To differentiate between the terms of $\afflang$ and the terms of $\iolang$, we use the {\color{orange}orange} color to refer to types and programs of $\afflang$.
\begin{figure}[t]\small
  \begin{mathpar}
    \infer
    {b \in \mathbb{B}}
    {\aCtx \provesAff b : \tBool}
    \and
    \infer
    {n \in \mathbb{N}}
    {\aCtx \provesAff n : \tNat}
    \and
    \axiom
    {\aCtx \provesAff \unittt : \tUnit} 
    \and
    \axiom
    {\aCtx_1, \avar : \typeAff, \aCtx_2 \provesAff \avar : \typeAff}
    \and
    \infer
    {\avar : \typeAff_1, \aCtx \provesAff \expr : \typeAff_2}
    {\aCtx \provesAff \LamA \avar. \expr : \typeAff_1 \lolli \typeAff_2}
    \and
    \infer
    {\aCtx_1 \provesAff \expr_1 : \typeAff_1 \lolli \typeAff_2
      \and \aCtx_2 \provesAff  \expr_2 : \typeAff_1
    }
    {\aCtx_1, \aCtx_2 \provesAff \expr_1 \ \expr_2 : \typeAff_2}
    \and
    \infer
    {\aCtx_1 \provesAff \expr_1 : \typeAff_1
      \and \aCtx_2 \provesAff  \expr_2 : \typeAff_2}
    {\aCtx_1,\aCtx_2 \provesAff \aPair{\expr_1}{\expr_2} :  \typeAff_1 \atensor \typeAff_2}
    \and
    \infer
    {\aCtx_1 \provesAff \expr_1 : \typeAff_1 \atensor \typeAff_2
      \and \avar_1:\typeAff_1,\avar_2:\typeAff_2,\aCtx_2\provesAff \expr_2 : \typeAff
    }
    {\aCtx_1, \aCtx_2 \provesAff \LetA \aPair{\avar_1}{\avar_2}=\expr_1 in \expr_2 : \typeAff}
    \and
    \infer
    {\aCtx \provesAff \expr : \typeAff}
    {\aCtx \provesAff \aAlloc(\expr) : \tRef{\typeAff}}
    \and
    \infer
    {\aCtx \provesAff \expr : \tRef{\typeAff}}
    {\aCtx \provesAff \aDealloc(\expr) : \tUnit}
    \and
    \infer
    {\aCtx_1 \provesAff \expr_1 : \tRef{\typeAff_1}
      \and \aCtx_2 \provesAff \expr_2 : \typeAff_2}
    {\aCtx_1,\aCtx_2  \provesAff \aReplace(\expr_1,\expr_2) : \typeAff_1 \atensor\, \tRef{\typeAff_2}}
  \end{mathpar}
  \caption{Type system for $\afflang$.}
  \label{fig:afflang_typing}
\end{figure}
The type system for $\afflang$ is given in \Cref{{fig:afflang_typing}}.
Let us explain some of the details.
The language $\afflang$ contains Booleans, natural numbers, and the unit type.
It also features linear functions $\typeAff_1 \lolli \typeAff_2$.
Note that in the typing rule for function application, the context is split between typing of the function and typing of the argument.
This ensures that the function and its argument do not share any variables or resources in common.

The language also features linear pairs $\typeAff_1 \atensor \typeAff_2$.
In the typing rule for pairs the typing environment has to be split between the two components. 
This ensures that we cannot have, e.g., pairs of the form $\aPair{x}{x}$.

Finally, the language features references with strong updates, i.e., references that can store values of different types.
The constructors $\aAlloc$ and $\aDealloc$ are used to allocate and free the references, respectively.
To ensure linearity, we have a single operation that combines reading from a reference and performing a strong update.
The program $\aReplace(r, v)$ reads the value that is stored in the reference $r$ and updates it to the value $v$.
It then returns a linear pair consisting of the old value and the reference itself, allowing one to reuse the reference later on.

The meaning of $\afflang$ is given by the interpretation function $\Sem{\expr}_{\rho} : \IT_{F}(A)$, where $\rho$ is the environment mapping the free variables of $\expr$ to \gitrees, and where $F$ is a signature which contains the higher-order store effects (\Cref{ex:store_sig}).
We assume that the ground type $A$ contains, in addition to natural numbers and the unit type, the type $\Loc$ of locations.
The semantic interpretation of $\afflang$ is given in \Cref{fig:afflang_interp}.
In the interpretation of compound expression we split the environment $\rho$ into the environments $\rho_1$ and $\rho_2$, for the free variables of $\expr_1$ and $\expr_2$ respectively.

In order to ensure that the variables from the context $\aCtx$ are used at most once, we wrap every variable in a thunk which can be evaluated at most once:
\begin{align*}
 \Thunk(\alpha) \eqdef{}& \ALLOC\big(\Ret(0),\Lam \ell.
    \Fun(\Next(\Lam \_. \IF(\READ(\ell), \Err(Lin), \WRITE(\ell,\Rret(1)) \SEQ \alpha)))\big)\\
  \Force(\alpha) \eqdef{}& \APPs{\alpha}{\Rret(0)}
\end{align*}
When we called a thunked \gitree for the second time, it will return the error $\Err(Lin)$, signifying that we broke the linearity condition.
Here we assume that we have a separate error state $Lin\in\Error$, because we want to treat linearity condition errors separate from type errors or memory safety errors.
\begin{figure}[t]
  \begin{align*}
    \Sem{b}_\rho \eqdef{}&
                           \begin{cases}
                             \Rret(1)&\mbox{if b = $\langkwa{true}$}\\
                             \Rret(0)&\mbox{otherwise}
                           \end{cases}
    & \Sem{\unittt}_\rho \eqdef{} & \Rret(())\\
    \Sem{n} \eqdef&{} \Rret(n)
                                    &  \Sem{\LamA \avar.\expr}_\rho  \eqdef{}&
                                                                               \Fun(\Next(\Lam \alpha.\Sem{\expr}_{\rho[\avar \mapsto \alpha]})) \\
    \Sem{\avar}_\rho \eqdef{}& \Force(\rho(\avar)) 
                                    &  \Sem{\expr_1\ \expr_2}_\rho \eqdef{}&
                                                                             \begin{aligned}[t]
                                                                               &\LET x = \Sem{\expr_2}_{\rho_2} IN \\
                                                                               &\APPs{\Sem{\expr_1}_{\rho_1}}{\Thunk(x)}
                                                                             \end{aligned} \\
    \Sem{\LetA \aPair{\avar_1}{\avar_2}=\expr_1 in \expr_2}_{\rho} \eqdef{}&
                                                                             \begin{aligned}[t]
                                                                               &\LET x=\Sem{\expr_1}_{\rho_1} IN\\
                                                                               &\LET y=\Thunk(\Proj{1}(x)) IN\\
                                                                               &\LET z=\Thunk(\Proj{2}(x)) IN\\
                                                                               &\Sem{\expr_2}_{\rho_2[\avar_1\mapsto y,\avar_2\mapsto z]}
                                                                             \end{aligned}
    &  \Sem{\aPair{\expr_1}{\expr_2}}_{\rho} \eqdef{}&
                                                       (\Sem{\expr_1}_{\rho_2},\Sem{\expr_2}_{\rho_2}) \\
    \Sem{\aAlloc(\expr)}_{\rho} \eqdef{}&
                                          \LET x=\Sem{\expr}_\rho IN \ALLOC(x,\Rret)
                                    &  \Sem{\aDealloc(\expr)}_\rho \eqdef{}&
                                                                             \begin{aligned}[t]
                                                                               &\getret(\Sem{\expr}_\rho,\DEALLOC)
                                                                             \end{aligned}\\
    \Sem{\aReplace(\expr_1,\expr_2)}_{\rho} \eqdef{}& \multispan{3}{$\begin{aligned}[t] &\LET y = \Sem{\expr_2}_{\rho_2} IN \getret(\Sem{\expr_1}_{\rho_1}, \Lam \ell.\LET x = \READ(\ell) IN\\ & \quad\WRITE(\ell,y) \SEQ (x, \Rret(n)))\end{aligned}$}
  \end{align*}
  
  \caption{Interpretation of $\afflang$.}
  \label{fig:afflang_interp}
\end{figure}
As such, in the interpretation of a function application we put the argument in a $\Thunk$, 
and whenever we use the argument (or any affine variable) we then have to $\Force$ it.

We can show that if we have a well-typed program, then it does not have any run-time errors, and that all the thunks are evaluated at most once:
\begin{proposition}
  \label{prop:afflang_safety}
  Suppose that ${}\provesAff e : \typeAff$, and suppose that
$(\sigma,\Sem{e}) \istep^\ast (\sigma',\beta)$.
Then $\beta \neq \Err(err)$.
\end{proposition}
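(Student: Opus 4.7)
The plan is to follow the same recipe used for $\iolang$ in \Cref{sec:adequacy}, but with a unary (safety-only) affine logical relation built in the Iris separation logic over \gitrees, and then to derive the statement as a consequence of \Cref{thm:wp_adequacy}. Concretely, I would define a value predicate $\VVa{\typeAff} : \ITv \to \Prop$ and an expression predicate $\EEa{\typeAff}(\alpha) \eqdef \wpre{\alpha}{\VVa{\typeAff}}$. The crucial departure from the $\iolang$ logical relation is that $\VVa{\typeAff}$ is \emph{not} wrapped in $\Box$; instead it is an ordinary, non-duplicable separation-logic predicate whose clauses consume resources multiplicatively. Typical clauses are $\VVa{\tNat}(\beta_v) \eqdef \Exists n. \beta_v = \Nat(n)$; $\VVa{\typeAff_1 \atensor \typeAff_2}(\beta_v) \eqdef \Exists \alpha_1,\alpha_2.\, \beta_v = (\alpha_1,\alpha_2) \ast \VVa{\typeAff_1}(\alpha_1) \ast \VVa{\typeAff_2}(\alpha_2)$; $\VVa{\tRef{\typeAff}}(\beta_v) \eqdef \Exists \loc,\alpha.\, \beta_v = \Nat(\toNat(\loc)) \ast \loc \mapsto \alpha \ast \VVa{\typeAff}(\alpha)$; and $\VVa{\typeAff_1 \lolli \typeAff_2}(\beta_v) \eqdef \Exists f.\, \beta_v = \Fun(f) \ast \All \alpha_w.\, \VVa{\typeAff_1}(\alpha_w) \wand \EEa{\typeAff_2}(\APPs{\beta_v}{\alpha_w})$, \emph{without} a $\Box$, so the function hypothesis is itself a linear resource.

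To handle free variables, I would introduce a thunk predicate $\Thunkpred{\typeAff}(\alpha) \eqdef \wpre{\Force(\alpha)}{\VVa{\typeAff}}$ and interpret contexts by $\VVCtxa{\aCtx}(\rho) \eqdef \bigast_{\avar:\typeAff \in \aCtx} \Thunkpred{\typeAff}(\rho(\avar))$. The semantic typing judgement is then $\aCtx \modelsAff \expr : \typeAff \eqdef \All \rho.\, \heapctx \ast \VVCtxa{\aCtx}(\rho) \wand \EEa{\typeAff}(\Sem{\expr}_\rho)$. The main work is the \textbf{fundamental lemma}: $\aCtx \provesAff \expr : \typeAff$ implies $\aCtx \modelsAff \expr : \typeAff$, proved by induction on the typing derivation. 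Context splitting in the rules for application, pairs, $\LetA$, and $\aReplace$ is discharged directly by separation logic's $\ast$-splitting in the $\VVCtxa{\aCtx_1,\aCtx_2}$ predicate. For the thunk-introducing cases (application, $\LetA$, variable binders) I would prove a small auxiliary lemma stating that if $\VVa{\typeAff}(\beta_v)$ holds then $\wpre{\Thunk(\beta_v)}{\Ret\gamma.\, \Thunkpred{\typeAff}(\gamma)}$ holds; this uses the store logic from \Cref{sec:store_logic} together with the fact that a thunk allocates a fresh flag cell and gives away its points-to, so after the single $\Force$ the flag is gone and no second force is possible.

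With the fundamental lemma in hand, \Cref{prop:afflang_safety} follows from adequacy. Given $\provesAff e : \typeAff$, the fundamental lemma yields $\heapctx \vdash \EEa{\typeAff}(\Sem{e})$, i.e.\ a weakest-precondition triple for $\Sem{e}$ with a postcondition that never mentions $\Err$. Applying \Cref{thm:wp_adequacy} then yields that any reduct $(\sigma,\Sem{e}) \istep^{\ast} (\sigma',\beta)$ is either a value satisfying $\VVa{\typeAff}$ or further reducible; in both cases $\beta \neq \Err(\mathit{err})$, as required. (Since our setting has multiple errors, including the $Lin$ error, I would set up the weakest precondition so that reifier-$\None$ results in any error class count as failures---this is the setup promised in the footnote after \Cref{thm:wp_adequacy}.)

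The main obstacle I expect is the treatment of \emph{linearity} for thunks: the statement we want is that a well-typed program never hits $\Err(Lin)$, yet the clause in $\Thunk$ that raises $Lin$ is syntactically present and must be ruled out semantically. The resolution is that $\Thunkpred{\typeAff}$ is non-duplicable, and its internal representation encodes ownership of the flag cell $\ell \mapsto \Nat(0)$; whenever $\Force$ is invoked, the $\IF(\READ(\ell),\ldots)$ branch taken is the safe one, and the flag is then overwritten, after which no further $\Force$ is possible because the resource has been consumed. Once this is set up, all the affine typing rules are straightforward; the context splitting in the typing rules corresponds \emph{exactly} to splitting the separating conjunction in $\VVCtxa{\aCtx}$, which is one of the main payoffs of formulating the model in separation logic over \gitrees.
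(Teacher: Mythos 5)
Your proposal is correct and follows essentially the same route as the paper: a unary, non-persistent logical relation over the separation logic (no $\Box$ on the function case, points-to in the reference case, separating conjunction for pairs and context splitting), with variables interpreted via a thunk predicate $\wpre{\Force(-)}{\AVV{\typeAff}}$, a fundamental lemma proved by induction on typing derivations (via compatibility lemmas), and the conclusion extracted through the adequacy/safety theorem for the weakest precondition (\Cref{thm:wp_adequacy}). The one presentational nuance is that the paper writes the context interpretation with a quantifier over context entries, but its intended (and formalized) reading is exactly your iterated separating conjunction, so there is no substantive difference.
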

To prove \Cref{prop:afflang_safety} we use a logical relation, given in \Cref{fig:afflang_logrel}, defined similarly to the logical relation from \Cref{sec:adequacy}.
The interpretation $\AVV{-}$ of the base types cover the appropriate subsets of natural numbers.
Reference types are interpreted using the ``pointsto'' $\ell \mapsto  \alpha_v$ proposition, and affine pairs are interpreted component-wise.
The main differences to note here are:
\begin{enumerate*}[label=(\arabic*)]
\item variables in $\afflang$ are interpreted as thunks, and thus we adjust the interpretation $\AVVCtx{\aCtx}$ to account for that;
\item values in $\afflang$ can be used at most once; hence
 the value interpretation is not persistent, i.e., there is no persistently modality $\Box$ in the interpretation of function types.
\end{enumerate*}
\begin{figure}[t]\small
  \begin{align*}
    \AVV{\tUnit}(\beta_v) \eqdef{}& \beta_v = \Rret(())
&  \AVV{\tNat}(\beta_v) \eqdef{}& \Exists n \in \mathbb{N}. \beta_v = \Rret(n)
\\  
  \AVV{\type_1 \lolli \type_2}(\beta_v) \eqdef{}& 
                                                     \All \alpha_w. \AVV{\type_1}(\alpha_w) 
                                                     \wand \AEE{\type_2}(\APPs{\beta_v}{\alpha_w})
& \AVV{\tBool}(\beta_v) \eqdef{}& \beta_v = \Rret(0) \vee \beta_v = \Rret(1)
    \\ \AVV{\tRef{\type}}(\beta_v) \eqdef{}&
      \begin{aligned}[t]
      &\Exists \ell \in \Loc,\, \alpha_v. \big(\beta_v = \Rret(\ell))\big) \ast{}\\
      & \ell\mapsto \alpha_v \ast \AVV{\type}(\alpha_v)
      \end{aligned}  
& \AVV{\type_1 \atensor \type_2}(\beta_v) \eqdef{}&
  \begin{aligned}[t]
    &\Exists \gamma_v,\delta_v. \beta_v = (\gamma_v,\delta_v) \ast{}\\
    &\AVV{\type_1}(\gamma_v) \ast \AVV{\type_2}(\delta_v)
  \end{aligned}
\\\AEE{\type}(\alpha) \eqdef{}& 
      \heapctx \wand \wpre{\alpha}{\Ret \beta_v. \AVV{\type}(\beta_v)}
& \protec(\Phi)(\beta_v) \eqdef{}& \wpre{\Force(\beta_v)}{\Phi} 
\\ \AVVCtx{\aCtx}(\rho) \eqdef{}&\All (\avar :\type) \in \aCtx.
   \protec(\AVV{\type})(\rho(\avar))
&  \aCtx \modelsAff \alpha: \typeAff \eqdef{}
      \All \rho.{}& \AVVCtx{\aCtx}(\rho) \implies
          \AEE{\type}(\alpha(\rho))
  \end{align*}  
\caption{Logical relation for $\afflang$.}
\label{fig:afflang_logrel}
\end{figure}
\begin{lemma}[Fundamental property]
\label{lem:fundamental_aff}
  For any expression $e$, if $\aCtx \provesAff e : \typeAff$, then $\aCtx \modelsAff \Lam \rho.\Sem{\alpha}_\rho : \typeAff$.
\end{lemma}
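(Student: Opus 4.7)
The plan is to proceed by induction on the typing derivation $\aCtx \provesAff e : \typeAff$, establishing that for every substitution $\rho$ with $\AVVCtx{\aCtx}(\rho)$ we have $\AEE{\typeAff}(\Sem{e}_\rho)$, i.e., $\heapctx \wand \wpre{\Sem{e}_\rho}{\AVV{\typeAff}}$. For each rule whose premises split the context as $\aCtx = \aCtx_1, \aCtx_2$, the separation logic $\ast$-structure of $\AVVCtx{\aCtx}(\rho)$ lets us split $\rho = \rho_1, \rho_2$ so that $\AVVCtx{\aCtx_i}(\rho_i)$ holds on disjoint resources; this matches the split used in the interpretation (\Cref{fig:afflang_interp}) and is what allows the logical relation to track linearity of variables and heap cells simultaneously.

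A key auxiliary lemma is that thunking preserves the $\AEE{-}$ relation: for any $\alpha$ and $\Phi$, one can derive $\wpre{\alpha}{\Phi} \vdash \wpre{\Force(\Thunk(\alpha))}{\Phi}$, and more specifically $\AEE{\typeAff}(\alpha) \vdash \AEE{\typeAff}(\Force(\Thunk(\alpha)))$, by using \ruleref{wp-alloc} to allocate the flag, followed by \ruleref{wp-lam}, \ruleref{wp-read}, and \ruleref{wp-write} to discharge the thunk body. Symbolically, after forcing, the flag is $\Nat(0)$, so the conditional takes the non-error branch and the continuation executes $\alpha$ with the postcondition $\Phi$. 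The second call would find the flag set to $\Nat(1)$ and raise $\Err(Lin)$, but by linearity of the resource $\ell \mapsto \Nat(1)$ in the postcondition, this second call can never occur under our relation.

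With this in hand the cases proceed as follows. The variable rule is immediate: $\Sem{\avar}_\rho = \Force(\rho(\avar))$, and the hypothesis provides $\protec(\AVV{\typeAff})(\rho(\avar)) = \wpre{\Force(\rho(\avar))}{\AVV{\typeAff}}$ directly. For $\lambda$-abstraction, $\Sem{\LamA \avar.\expr}_\rho = \Fun(\Next(\Lam\alpha.\Sem{\expr}_{\rho[\avar\mapsto\alpha]}))$; we use the induction hypothesis on $\expr$, noting that the extended context requires $\protec(\AVV{\typeAff_1})(\alpha)$, which is exactly what is needed after applying $\Thunk$ in the application case. For function application, we use i.h. on $\expr_2$ (under $\rho_2$) to obtain $\AVV{\typeAff_1}(\alpha_w)$ for the argument, pack it with $\Thunk$ (using the auxiliary lemma to obtain $\protec(\AVV{\typeAff_1})$ on the thunk), and then invoke the i.h. on $\expr_1$ together with the value relation for $\typeAff_1 \lolli \typeAff_2$. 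The pair and let-pair cases are analogous, the latter requiring $\Thunk$ on both projections after evaluating $\expr_1$ to a pair and exploiting that $\Proj{i}(\alpha_v, \beta_v) = \Tick^3(\cdot)$. The reference cases use \ruleref{wp-alloc}, \ruleref{wp-read}/\ruleref{wp-write} for $\aReplace$, and \ruleref{wp-dealloc}; linearity of $\ell \mapsto \alpha_v$ inside $\AVV{\tRef{\typeAff}}$ is what guarantees that the location is consumed by $\aDealloc$ or transferred back out of $\aReplace$.

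The main obstacle is the treatment of $\Thunk$ and $\Force$ in the presence of separation logic invariants. Each thunk allocates a private flag location that is not placed under $\heapctx$'s invariant but remains linearly owned through the $\protec$ predicate; we must be careful that $\heapctx$ is persistently re-usable across forcings while the private flag is not. Concretely, the lemma $\wpre{\alpha}{\Phi} \vdash \wpre{\Force(\Thunk(\alpha))}{\Phi}$ must be stated so that the freshly allocated flag location is framed out of $\Phi$; otherwise the postcondition of the inner $\wpre{\alpha}{\Phi}$ would leak ownership of $\ell$. Once this framing is handled correctly (which is standard once $\heapctx$-invariants are set up as in \Cref{sec:store_logic}), the remaining cases are routine calculations using the weakest precondition rules for higher-order store, together with the homomorphism-based reasoning from \Cref{lem:hom_istep} encapsulated in \ruleref{wp-hom}.
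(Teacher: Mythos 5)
Your proposal is correct and follows essentially the same route as the paper: induction on the typing derivation, with one case per rule discharged using the separation-logic rules for higher-order store, the $\protec$/$\Thunk$/$\Force$ interplay for variables and arguments, and the $\ast$-splitting of $\AVVCtx{\aCtx}(\rho)$ to mirror the context splits. The only (organizational) difference is that the paper factors each case into a standalone compatibility lemma stated against the semantic judgment $\modelsAff$ --- so that the same lemmas can be reused verbatim for the combined language in \Cref{sec:comb_safety} --- whereas you inline the cases into the induction.
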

We prove the fundamental property by induction on the typing derivation.
More specifically, for each typing rule we prove an associated \emph{compatibility lemma}, by replacing expressions with interaction trees and the derivability $\provesAff$ with validity $\modelsAff$.
For example, the compatibility lemma for $\aDealloc$ looks as follows:
\begin{lemma}
  Suppose that $\aCtx \modelsAff \alpha : \tRef{\typeAff}$.
Then $\aCtx \modelsAff \Lam \rho. \getret(\alpha(\rho),\DEALLOC) : \tUnit$.
\end{lemma}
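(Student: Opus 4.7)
Unfolding the definitions, we need to show that for any $\rho$ with $\AVVCtx{\aCtx}(\rho)$, given $\heapctx$ we can derive the weakest precondition
\[
  \wpre{\getnat(\alpha(\rho),\Lam n.\DEALLOC(\fromNat(n)))}{\Ret \beta_v. \beta_v = \Nat(0)}.
\]
The plan is to proceed compositionally in the style typical for these compatibility proofs, using the hypothesis on $\alpha$ to handle the inner computation first, and then using the domain-specific rules for higher-order store to handle $\DEALLOC$.

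First, I would instantiate the assumption $\aCtx \modelsAff \alpha : \tRef{\typeAff}$ at $\rho$ together with $\AVVCtx{\aCtx}(\rho)$ to obtain $\AEE{\tRef{\typeAff}}(\alpha(\rho))$. Feeding it $\heapctx$ yields the weakest precondition $\wpre{\alpha(\rho)}{\Ret \beta_v.\, \AVV{\tRef{\typeAff}}(\beta_v)}$. Next, since $\getnat(-,f)$ is a homomorphism in its first argument (it is one of the operations listed in \Cref{fig:lambda_ops}), I apply \ruleref{wp-hom}; this reduces the proof obligation to showing the weakest precondition of $\getnat(\beta_v,\Lam n.\DEALLOC(\fromNat(n)))$ for an arbitrary value $\beta_v$ satisfying $\AVV{\tRef{\typeAff}}$.

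Unfolding $\AVV{\tRef{\typeAff}}(\beta_v)$, we obtain $\ell$ and $\alpha_v$ such that $\beta_v = \Nat(\toNat(\ell))$, with the resources $\ell \mapsto \alpha_v$ and $\AVV{\typeAff}(\alpha_v)$ at hand. The computation rule $\getnat(\Nat(n),f) = f(n)$ together with \ruleref{wp-val} simplifies the goal to $\wpre{\DEALLOC(\fromNat(\toNat(\ell)))}{\Ret \beta_v.\, \beta_v = \Nat(0)}$, and since $(\toNat,\fromNat)$ is a section-retraction pair we have $\fromNat(\toNat(\ell)) = \ell$, so the goal becomes $\wpre{\DEALLOC(\ell)}{\Ret \beta_v.\, \beta_v = \Nat(0)}$. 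Applying \ruleref{wp-dealloc} with the owned points-to $\ell \mapsto \alpha_v$ (and $\heapctx$, which is persistent) closes the proof: the premise $\later \Phi(\Nat(0))$ is immediate, discarding the now-irrelevant $\AVV{\typeAff}(\alpha_v)$ hypothesis (this is fine since the logic is affine, and in Iris we can always drop owned resources).

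The argument is essentially routine; the only subtle point is that one must be careful to apply \ruleref{wp-hom} before eliminating the reference existential, so that the inner WP of $\alpha(\rho)$ can be discharged directly by the assumption on $\alpha$. There is nothing that looks like a real obstacle — this compatibility lemma is simpler than the ones for $\aReplace$ or function application since $\aDealloc$ does not split the context, so there is no need to reason about separate environments $\rho_1,\rho_2$, and the linearity of the reference is consumed in one shot by \ruleref{wp-dealloc}.
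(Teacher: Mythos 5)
Your proof is correct and matches the paper's (only sketched) argument: the paper merely remarks that the compatibility lemmas follow ``relatively straightforwardly from the separation logic rules,'' and your derivation --- instantiate the hypothesis to get the WP for $\alpha(\rho)$, bind via \ruleref{wp-hom} using that $\getnat(-,f)$ is a homomorphism, unfold $\AVV{\tRef{\typeAff}}$ to extract $\ell \mapsto \alpha_v$, compute $\fromNat(\toNat(\ell)) = \ell$, and close with \ruleref{wp-dealloc} --- is exactly that straightforward argument. The side remarks about persistence of $\heapctx$ and affinely discarding $\AVV{\typeAff}(\alpha_v)$ are also the right justifications.
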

Proving all the compatibility lemmas is relatively straightforward using the separation logic rules.
Having separate compatibility lemmas will be useful for us in the next section.

By combining the fundamental property with the safety theorem for the weakest precondition calculus (\Cref{thm:wp_adequacy}) we obtain a proof of \Cref{prop:afflang_safety}.

\paragraph{Safety for $\iolang$.}
Similar to the logical relation for safety for $\afflang$, we define a logical relation for $\iolang$.
It is simply an unary version of the logical relation from \Cref{sec:adequacy}.
For example, the expression relation is defined as
$$
\EE{\type'}(\alpha) \eqdef \All \sigma'.\hasstate_i(\sigma') \wand \wpre{\alpha}{\Ret \beta_v. \Exists \sigma'_1. \VV{\type'}(\beta_v) \ast \hasstate_i(\sigma'_1)}
$$
where $\hasstate_i$ tracks the state for the input/output effects.
We omit the other details here and direct an interested reader to the Coq formalization.
We only note that just like for $\afflang$, we split the proof of the fundamental property into compatibility lemmas, which we will use in the next section.

\subsection{Interoperability Between Languages}
Following the approach of \citep{PattersonEtAl:2022}, we allow the interaction between the languages $\iolang$ and $\afflang$, using the guarded interaction trees as the ``common ground'', combining the effects of the two languages.
At the syntactic language level, this is done by allowing one to embed the expressions of $\iolang$ into the $\afflang$ programs.
The embedding is given by the following typing rule:\footnote{For simplicity, we only consider one-way embeddings from $\iolang$ to $\afflang$, and we only consider embeddings of closed terms. This simplifies the type system, but does not lead to loss of expressivity, since we allow type conversions of functions.}
\begin{mathpar}
\inferH{typed-conv}
{\proves e : \tau' \and \tyconv{\tau'}{\tau}}
{\aCtx \provesAff \aEmbed{e}_{\tyconv{\tau'}{\tau}} : \langkwa{\tau}}
\end{mathpar}
The crucial ingredient for the interaction is a type conversion relation $\tyconv{\tau'}{\tau}$
stating that the $\iolang$ type $\tau'$ is convertible to the $\afflang$ type $\langkwa{\tau}$.

We have the following type conversions:
\begin{mathpar}
\axiom{\tyconv{\Tnat}{\tNat}}
\and
\axiom{\tyconv{\Tnat}{\tUnit}}
\and
\axiom{\tyconv{\Tnat}{\tBool}}
\and
\infer
{\tyconv{\type'_1}{\type_1} \and \tyconv{\type'_2}{\type_2}}
{\tyconv{(\Tnat \to \type'_1) \to \type'_2}{\type_1 \lolli \type_2}}
\end{mathpar}
The first three type conversions say that we can freely convert between integers, Booleans, and the unit type (since all of them have similar internal representation).
The last conversion is more interesting.
It says that we can convert between affine functions and non-affine functions.
The affine argument $\langkwa{\type_1}$ is represented as a thunk $(\Tnat \to \type'_1)$, which we will protect at runtime to ensure that it is not invoked more than once.
The type $\Tnat$ is used in lieu of the unit type which is absent from $\iolang$.

\paragraph{Glue code for conversion functions}
In order to
\begin{enumerate*}
\item convert between different types, and 
\item ensure the linearity of $\afflang$ programs that might cross the boundary to $\iolang$,
\end{enumerate*}
we need to interpret embedded terms with additional \emph{glue code}.
For every type conversion $\tyconv{\type'}{\type}$ we generate recursively a pair of conversion functions $\toAff_{\type',\langkwa{\type}}$ and $\fromAff_{\langkwa{\type},\type'}$ which convert the representations from $\type'$ to $\langkwa{\type}$ and vice versa.
The glue code for converting between the base types ensures that the underlying natural number representation stays within the range.
For example, for $\tyconv{\Tnat}{\Tbool}$ we have:
\begin{align*}
 \toAff_{\Tnat,\tBool}(\alpha) \eqdef{}& \IF(\alpha,\Rret(1),\Rret(0))
& \fromAff_{\tBool,\Tnat}(\alpha) \eqdef{}& \alpha
\end{align*}
The glue code for converting functions is a bit more involved:\footnote{The glue code for functions is a bit more complicated than in \cite{PattersonEtAl:2022}. Specifically, they do not protect the converted affine functions. This is fine in their setting, because their affine functions are pure, and invoking them multiple times does not lead to run-time errors. However, in the presence of references with strong updates this assumption is no longer true, and not protecting the converted functions will lead to unsound behavior!}
\begin{align*}
  \toAff_{{(\Tnat \to \type'_1) \to \type'_2},\langkwa{\type_1 \lolli \type_2}}(\alpha)
\eqdef{}&
\LET p = \alpha IN \\ \Fun(&\Next(\Lam x.\LET y = \fromAff_{\langkwa{\type_1},\type'_1}(\Force(x)) IN 
 \toAff_{\type'_2,\langkwa{\type_2}}(\APPs{p}{\Thunk(y)})))\\
\fromAff_{\langkwa{\type_1 \lolli \type_2},{(\Tnat \to \type'_1) \to \type'_2}}(\alpha)
\eqdef{}&
 \LET p = \alpha IN \LET p' = \Thunk(p) IN \\
\Fun(\Next(&\Lam x. \LET f = \Force(p) IN \LET y = \toAff_{\type'_1,\langkwa{\type_1}}(\Force(x)) IN \fromAff_{\type'_2,\langkwa{\type_2}}(\APPs{f}{\Thunk(y)})))
\end{align*}
When we convert a function, we need to recursively convert the argument and the result; in addition the argument to the function needs to be protected with a $\Thunk$.
Furthermore, when affine functions are converted to non-affine ones, we need to protect the function itself with a $\Thunk$, to ensure that the function is not invoked multiple times.
Calling an affine function multiple times might be unsound, e.g., calling the following function twice will attempt to deallocate an already deallocated reference:
$$
(\LamA \ell. \LamA \_.(\LamA \_. 7)\ \aDealloc(\ell))\ (\aAlloc(42)) : \tUnit \lolli \tNat.
$$

\paragraph{Partial safety for the combined language}
We interpret the combined language with embeddings using the glue code.
The embedding from $\iolang$ to $\afflang$ is interpreted as follows:
$$
\Sem{\aEmbed{e}_{\tyconv{\type'}{\type}}}_{\rho} =
  \toAff_{\type',{\langkwa{\type}}}(\IOSem{e}),
$$
where $\IOSem{-}$ is the interpretation function for $\iolang$ expressions from \Cref{{sec:iolang_model}}.
The interpretation for all the operations, except for the embedding, remains unchanged.

Using the modular approach we described in \Cref{sec:subeffects}, we interpret the extended language $\iolang+\afflang$ into the guarded interaction trees $\IT_{\mathit{store}, \mathit{io}}$
with the signature that combines the input/output effects and the store effects.
This ensures that the interpretation of the languages end up in the same domain, where they can interact.
By combining the reifiers for the effects of the individual languages we also get the reduction relation
 $((\sigma_1,\sigma'_1),\alpha) \istep ((\sigma_2,\sigma'_2), \beta)$, where $\sigma_1$ and $\sigma_2$ are stores, and $\sigma'_1$ and $\sigma'_2$ are input/output tapes.

 Of course, in the combined language with conversions, our programs can actually violate the linearity condition, since it is no longer enforced by the type system.
 However, we can prove that linearity violations at the boundary are the only errors that we will possibly get.
Thus we will to prove the following safety theorem:
\begin{theorem}
\label{prop:combined_safety}
Suppose that $\ \provesAff e : \typeAff$ with the embedding rule, and suppose that
$((\sigma_1,\sigma'_1),\Sem{e}) \istep^\ast ((\sigma_2,\sigma'_2),\beta)$.
Then either $\beta$ is not an error, or $\beta = \Err(Lin)$.
\end{theorem}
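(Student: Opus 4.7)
The plan is to follow the compositional logical-relations recipe used for the standalone $\afflang$ and $\iolang$ safety proofs, but with a weakest precondition calculus that tolerates the error $\Err(Lin)$ (as anticipated by the footnote to \Cref{thm:wp_adequacy}). If we work with a variant $\wpre{-}{-}$ whose adequacy conclusion permits stuck configurations headed by $\Err(Lin)$ but no other errors, then establishing $\TRUE \vdash \wpre{\Sem{e}}{\Ret \_. \TRUE}$ in the combined setting will yield exactly \Cref{prop:combined_safety}. So the goal reduces to deriving this weakest precondition for every well-typed $e$ in the combined language.

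To obtain it I would lift the logical relations of \Cref{sec:iolang_model} and \Cref{sec:afflang_model} to the signature $\IT_{\mathit{store},\mathit{io}}$, following the modular subeffect approach of \Cref{sec:subeffects}. Because the subeffect embedding preserves the computational rules of the store and input/output operations, every per-language compatibility lemma from the standalone safety proofs transfers verbatim; the store and the I/O tape are tracked by the two independent $\hasstate_i(-)$ predicates of \Cref{sec:subeffects}, guarded by separate invariants. The only new typing rule is \ruleref{typed-conv}, so the one genuinely new compatibility lemma required is: if $\vdash e : \type'$ is validated by the $\iolang$ logical relation and $\tyconv{\type'}{\type}$, then $\toAff_{\type',\type}(\IOSem{e})$ is validated by the $\afflang$ logical relation at $\langkwa{\type}$.

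By the standard split into $\EE$ and $\VV$, this reduces to a pair of mutually recursive \emph{conversion lemmas}, proved by induction on the derivation of $\tyconv{\type'}{\type}$: assuming $\VV{\type'}(\beta_v)$, the tree $\toAff_{\type',\type}(\beta_v)$ lies in $\AEE{\type}$, and assuming $\AVV{\type}(\beta_v)$, the tree $\fromAff_{\type,\type'}(\beta_v)$ lies in $\EE{\type'}$. The base cases at $\tNat$, $\tBool$, and $\tUnit$ are immediate from the definition of the glue code, since the underlying representation is a natural number of a fixed range. The arrow case applies the two induction hypotheses once on the argument and once on the result, together with the weakest-precondition rules from \Cref{sec:store_logic} for the $\Thunk/\Force$ idiom.

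The main obstacle is the arrow case in the $\fromAff$ direction: because $\iolang$ is non-affine, once an $\afflang$ function crosses the boundary it may be invoked many times, which for a genuine affine function is unsound (it might for example free an already-deallocated reference, as in the counterexample at the end of \Cref{sec:afflang_model}). This is precisely why the glue code wraps $\fromAff$'ed functions in an outer $\Thunk$: a second invocation triggers the outer $\Force$ to yield $\Err(Lin)$, which our weakened weakest precondition tolerates but which the original one of \Cref{sec:program_logic} would reject. Formally, this step uses L\"ob induction, because the conversion lemmas are mutually recursive through $\Thunk$ and proceed via the guarded recursive occurrences of $\type'$ and $\type$ in the arrow case. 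Once both conversion lemmas are established, the compatibility lemma for \ruleref{typed-conv} follows, the fundamental property for the combined language is assembled from the old compatibility lemmas plus this new one, and \Cref{prop:combined_safety} is obtained by applying adequacy of the tolerant weakest precondition.
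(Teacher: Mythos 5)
Your overall architecture matches the paper's: a weakest precondition parameterized by a stuckness predicate instantiated with $\Lam e.\, e = Lin$, reuse of the per-language compatibility lemmas, and a single new compatibility lemma for \ruleref{typed-conv} split into a pair of conversion lemmas (one per direction of the glue code) proved by induction on the derivation of $\tyconv{\type'}{\type}$. One caveat on the claim that the standalone compatibility lemmas ``transfer verbatim'': this is only true because the paper states the individual logical relations with a freely adjoined Kripke parameter $P : A \to \Prop$ threaded through the expression relation; the combined proof instantiates $P$ in the $\iolang$ relation with $\heapctx$ and in the $\afflang$ relation with $\hasstate_i$, so each language's lemmas silently frame the other language's resources. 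Without that generalization the standalone lemmas concern different propositions and do not literally apply.

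The genuine gap is in the arrow case of the conversion lemma. The $\iolang$ value interpretation of a function type carries a persistently modality $\Box$ (non-affine functions may be applied arbitrarily often), while $\AVV{\type_1 \lolli \type_2}$ is deliberately \emph{not} persistent. So whenever an affine function crosses the boundary (and likewise for the thunked argument passed inward in the $\toAff$ direction), you must manufacture a persistent proposition out of a one-shot resource. L\"ob induction does not do this: it handles guarded recursion, not the affinity/persistence mismatch, and in any case the conversion lemmas go by structural induction on the conversion derivation rather than by L\"ob. The paper's resolution is to stash the non-persistent interpretation inside a persistent Iris invariant tied to the thunk's reference,
$$
(\ell \mapsto \Nat(0) \ast \AVV{\type_1 \lolli \type_2}(\alpha_v)) \vee (\ell \mapsto \Nat(1)),
$$
so that the first $\Force$ opens the invariant, extracts the affine interpretation, and flips the flag, while any later $\Force$ finds $\ell \mapsto \Nat(1)$ and reduces to $\Err(Lin)$, which the weakened weakest precondition tolerates. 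Your proposal correctly identifies the operational role of the outer $\Thunk$, but gives no logical mechanism for expressing ``usable at most once'' underneath a $\Box$; without this invariant (or an equivalent escrow pattern) the arrow case does not go through.
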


\subsection{Logical Relations for the Combined Safety}
\label{sec:comb_safety}
We will prove \Cref{prop:combined_safety} by constructing a logical relation similarly to what we did for the individual languages in \Cref{sec:afflang_model}.
Our goal is to do so modularly, by reusing as much material from \Cref{sec:afflang_model} as possible. 
In particular, we will reuse all the compatibility lemmas we used to prove \Cref{lem:fundamental_aff}, and only prove one (!) new compatibility lemma for \ruleref{typed-conv}.
Letting $\EE{\type'}$ and $\AEE{\type}$ be the expression relations for the logical relation for $\iolang$ and $\afflang$ resp., this compatibility lemma is:
\begin{lemma}
\label{lem:compat_tyconv}
Suppose that $\EE{\type'}(\alpha)$ and $\tyconv{\type'}{\type}$.
Then $\AEE{\type}(\toAff_{\type',\langkwa{\type}}(\alpha))$.
Moreover, for the other direction, suppose that $\AEE{\type}(\alpha)$ and $\tyconv{\type'}{\type}$.
Then $\EE{\type'}(\fromAff_{\langkwa{\type},\type'}(\alpha))$.
\end{lemma}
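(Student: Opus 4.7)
The plan is to prove both directions of \Cref{lem:compat_tyconv} simultaneously by induction on the derivation of $\tyconv{\type'}{\type}$. The simultaneous treatment is forced on us by the function case, where the glue code for $\toAff$ at a function type calls $\fromAff$ on arguments and $\toAff$ on results (and dually for $\fromAff$), so the two conversion directions are mutually recursive.

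For the three base cases ($\Tnat\sim\tNat$, $\Tnat\sim\tUnit$, $\Tnat\sim\tBool$), the glue code is essentially the identity (modulo an $\IF$ in the Boolean case). The proof proceeds by unfolding the definition of $\AEE{\type}$ and $\EE{\type'}$, exposing a weakest precondition goal, then symbolically executing the glue code with \ruleref{wp-hom}, \ruleref{wp-val}, and (for the Boolean case) the computational rules for $\IF$. In each case we exploit the fact that $\VV{\Tnat}$, $\AVV{\tNat}$, $\AVV{\tUnit}$, and $\AVV{\tBool}$ all require the underlying \gitree value to be of the form $\Nat(n)$ for some $n$; the glue code either normalizes $n$ into the required subset (Booleans) or leaves it alone. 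The state frames $\hasstate_i(\sigma)$ and $\heapctx$ live in disjoint components of the global state (by the modular reifier construction of \Cref{sec:subeffects}), so they can be threaded through the proof without interfering.

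The hard case is the function conversion $\tyconv{(\Tnat \to \type'_1) \to \type'_2}{\type_1 \lolli \type_2}$. For the $\toAff$ direction we must show $\AEE{\type_1 \lolli \type_2}(\toAff(\alpha))$. After stepping through the initial $\LET$ and $\Fun$ constructors with the wp rules, we are left with a goal whose postcondition is $\AVV{\type_1 \lolli \type_2}$; by \ruleref{wp-val} this reduces to showing that the resulting $\Fun$-value is related, which means: for any $\alpha_w$ with $\AVV{\type_1}(\alpha_w)$, we must establish $\AEE{\type_2}(\APPs{\Fun(\ldots)}{\alpha_w})$. We apply \ruleref{wp-lam}, then symbolically run the body: first $\Force(x)$, then the inductive hypothesis for $\fromAff_{\langkwa{\type_1},\type'_1}$ gives us $\EE{\type'_1}$ of the converted argument, which we thunk up and feed to the original $\iolang$ function $\alpha$ obtained from $\EE{(\Tnat \to \type'_1)\to\type'_2}$ — here we use the hypothesis $\EE{\type'}(\alpha)$ to get a $\Fun$-value applied to the appropriate thunk, producing an expression in $\EE{\type'_2}$; finally the IH for $\toAff_{\type'_2,\langkwa{\type_2}}$ converts back. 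The $\fromAff$ direction is analogous but additionally wraps $p$ in its own $\Thunk$ at the top, so we must verify that the extra $\Force(p)$ inside the body correctly retrieves the affine function and that re-invocation would produce $\Err(Lin)$ — this is precisely why this glue code is needed for soundness.

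The main obstacle is bookkeeping in the function case: we must juggle nested thunks at two different types, weave the inductive hypotheses through both call directions, and show that the $\Thunk$ on the outer affine function in $\fromAff$ is consumed exactly once along any successful trace. A subtle point is that the $\iolang$ logical relation's expression relation threads a $\hasstate_i$ resource for the I/O tape while the $\afflang$ relation threads $\heapctx$ for the store; both must be available simultaneously at the boundary. This works because the two resources live in disjoint local components of the global state (\Cref{sec:subeffects}), so they can be held together via separating conjunction — but one must be careful to open and close $\heapctx$ around the I/O reifications inside the glue code, since the embedded $\iolang$ expression does not touch the heap at all and hence the invariant is trivially preserved across its execution.
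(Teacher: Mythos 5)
Your overall structure---mutual induction on the derivation of $\tyconv{\type'}{\type}$, with the two conversion directions proved simultaneously because the glue code is mutually recursive---matches the paper, and your treatment of the base cases and of the resource threading (the $\iolang$ relation carrying $\heapctx$ as its adjoined Kripke parameter and the $\afflang$ relation carrying $\hasstate_i$, with the stuckness predicate set to $\Lam e.\, e = Lin$) is essentially what the paper does. However, there is a genuine gap in your $\fromAff$ function case, and it is precisely the step the paper identifies as the crux of the whole lemma. The target value relation $\VV{(\Tnat \to \type'_1) \to \type'_2}$ places the body of the function clause under the persistently modality $\Box$, because non-affine functions may be invoked arbitrarily many times. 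The hypothesis you have available, namely $\AVV{\type_1 \lolli \type_2}(\alpha_v)$ for the underlying affine function value, is \emph{not} persistent---it may own points-to resources---so it cannot simply be ``woven through'' into a goal under $\Box$: the introduction rule for $\Box$ discards all non-persistent hypotheses. Observing that the outer $\Thunk$ operationally returns $\Err(Lin)$ on a second invocation does not resolve this; it explains why the glue code is \emph{safe} to run twice, but not how the first invocation logically gets hold of the non-persistent affine interpretation from inside a persistent context.

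The missing device is an Iris invariant tied to the thunk's flag location $\ell$, asserting
$(\ell \mapsto \Nat(0) \ast \AVV{\type_1 \lolli \type_2}(\alpha_v)) \vee (\ell \mapsto \Nat(1))$.
The invariant assertion itself is persistent, so it survives under the $\Box$; on each invocation one opens it, and either extracts the affine interpretation together with $\ell \mapsto \Nat(0)$ (re-closing with the right disjunct after the write), or finds $\ell \mapsto \Nat(1)$, in which case forcing yields $\Err(Lin)$, which is absorbed by the stuckness predicate. Without this ownership-transfer argument the $\fromAff$ case of the induction does not go through. Your $\toAff$ direction does not suffer from this problem (the source relation there is already persistent), and is otherwise as in the paper.
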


When we presented the separation logic and logical relations earlier, we presented a slightly simplified version which was sufficient for our purposes.
However, in order to prove \Cref{lem:compat_tyconv} we need to make use of features that we have not yet presented.
We describe those features now.

\paragraph{Separation logic for weak safety}
The first feature is that our notion of weakest precondition is actually parameterized by
a \emph{stuckness predicate} $s: \Error \to \Prop$, and satisfies the following rule
(in addition to the rules presented in  \Cref{sec:program_logic}):
\[
\axiom
{s(e) \vdash \wpre{\Err(e)}[s]{\Phi}}
\]
This means that if the stuckness predicate $s$ holds for some error $e$, then the weakest precondition predicate holds for that error, irrespectively of the postcondition.
The earlier presented weakest precondition, which did not allow for errors, is obtained by using
$s(e) = \FALSE$.
 The general weakest precondition with the stuckness predicate satisfies a version of the adequacy/safety property (\Cref{thm:wp_adequacy}), in which the \emph{(safety)} condition is replaced with the following condition:
\begin{itemize}
\item \emph{(weak safety)} 
either there are $\beta_1$ and $\sigma_1$ such that
    $(\beta,\sigma')\istep(\beta_1,\sigma_1)$,
or $\beta = \Err(e)$ with $e$ satisfying the predicate $s$.
\end{itemize}

All the logical relations presented earlier are actually parameterized by a predicate $s$
and uses $\wpre{\alpha}[s]{\Phi}$ in the expression interpretation --- to recover
the earlier stated theorems for full safety we simply instantiate the logical relations with $s = \Lam e. \FALSE$.

\paragraph{Freely adjoined Kripke structure}
In addition to the stuckness bit, we actually formulate our logical relations a bit more generally than what we have shown so far.
This is because in the logical relations for individual languages require particular resources that we need to combine when constructing a logical relation for the combined language.

Indeed, to ensure that our logical relations are sufficiently modular, we parameterize the expression relation by an arbitrary predicate $P : A \to \Prop$ of an arbitrary type $A$.
We refer to this predicate $P$ as freely adjoined Kripke structure (because, in the underlying model of Iris, it allows us to make arbitrary transitions between worlds, constrained by the predicate $P$).
The general definition of the expression relation for all our logical relation is thus:
$$
\EK{s}{P}{\Phi}(\alpha) \eqdef
\All x:A. P(x) \wand \wpre{\alpha}[s]{\Ret \beta_v. \Exists y:A. \Phi(\beta_v) \ast P(y)}.
$$
The idea is that the $P$ parameter describes additional resources (for other effects) and ensures
that ITrees in the expression relation preserve any such additional resources.
The idea is akin to the ``baking-in'' of the frame rule in models of separation logic for higher-order languages \cite{DBLP:journals/lmcs/BirkedalY08,DBLP:conf/icalp/BirkedalRSY08}. 

The expression relations for individual languages $\iolang$ and $\afflang$ are then both parameterized by
predicates $P : A \to \Prop$ and $s: \Error \to \Prop$ and defined as
\begin{align*}
  \EE{\type'}(\alpha) \eqdef{}& \EK{s}{\Lam (\sigma',x). \hasstate_i(\sigma') \ast P(x)}{\VV{\type'}}(\alpha)\\
  \AEE{\type}(\alpha) \eqdef{}& \EK{s}{\Lam x. \heapctx \ast P(x)}{\AVV{\type}}(\alpha).
\end{align*}

For $\iolang$ and for $\afflang$ we prove the compatibility properties for arbitrary $P$ and $s$
(in the proofs of the compatibility lemmas, the resources described by $P$ are just passed through).
We obtain full safety for the individual languages by instantiating $P$ with $\Lam x.\TRUE$ and $s$ with $\Lam e.\FALSE$.

When we prove partial safety for the combined language, we reuse the same logical relations and the same compatibility lemmas for the individual languages,
by instantiating $P$ in $\EE{\type'}$ with $P(x) =\heapctx$, and by instantiating $P$ in $\AEE{\type}$ with $P(x) = \hasstate_i(x)$, and $s$ with $\Lam e. (e = Lin)$ in both cases.
Then, to prove the fundamental property of the combined language we can reuse the compatibility lemmas for individual languages, and it only remains
to prove the compatibility \Cref{lem:compat_tyconv} for the type conversion.

The most interesting case of \Cref{lem:compat_tyconv} is the conversion between functions, which involves showing  soundness of the glue code.
The interpretation of non-affine functions is persistent, as it begins with $\Box$, since it is expected that you can use non-affine functions multiple times.
The interpretation of affine functions, however, is not persistent ---  functions can be invoked only once.
Because of that, we cannot directly use the interpretation $\AVV{\type_1 \lolli \type_2}$ when proving $\VV{(\Tnat \to \type'_1)  \to \type'_2}$.
Instead, we put the interpretation of the affine function in a persistent invariant, which states:
$$
(\ell \mapsto \Rret(0) \ast \AVV{\type_1 \lolli \type_2}(\alpha_v)) \vee (\ell \mapsto \Rret(1)).
$$
This describes when the affine function $\alpha_v$ is protected with the $\Thunk$ via a reference $\ell$:
either the function has not been invoked yet ($\ell \mapsto \Rret(0)$) and it satisfies the value interpretation, or the function has already been invoked ($\ell \mapsto \Rret(1)$) and forcing its thunk will result in $\Err(Lin)$, in which case we do not invoke the function.
(As a side remark, we note that we here crucially rely on Iris's powerful notion of invariants --- this is another example of why it is advantageous to use Iris as the basis for our separation logic.)

Having established the compatibility lemma for type conversions, and the compatibility lemmas for the operations for each individual language, we prove the fundamental property for the logical relation for the combined language.
In particular, for a closed term ${}\provesAff e : \typeAff$ we obtain
$\EK{\Lam e. e = Lin}{\Lam x. \heapctx \ast \hasstate_i(x)}{\AVV{\type}}(\Sem{e})$.
From the adequacy property of the weakest precondition 
we can then conclude \Cref{prop:combined_safety}.

In summary, this approach to logical relations, with freely adjoined Kripke structure as a parameter, allows us to scale proofs to interoperability between multiple different languages in a modular way.
For each individual language we can prove safety separately (without knowing in advance with which other languages we are going to interface).
Then we can combine logical relations for individual languages together, by instantiating the freely adjoined Kripke structure with the shared resources or with resources needed to verify the glue code, and reusing the compatibility lemmas.

\section{Discussion and Related Work}
\label{sec:related_work}
We have already discussed a lot of related work throughout the paper; in this section we include some further discussion
of related work.

\paragraph{Differences with interaction trees.}
Whilst our work takes direct inspiration from the interaction trees approach, there are some crucial differences.
One of the main difference comes from the treatment of effect reification.
The original type of interaction trees is a monad, and the effects in an interaction tree can be reified ``in one go'', for example, with a state monad transformer over ITrees.
In our case, we cannot reify all the effects in a guarded interaction tree, due to higher-order functions and higher-order effects.
For example, a guarded interaction tree can be a function that contains latent effects, but these effects can only be reified at the point when the function is invoked.
Because regular interaction trees contain only first-order structures, it is possible to traverse them completely, reifying all the effects.

Another difference worth mentioning, is that regular interaction trees can be extracted and executed from Coq.
Our formalization cannot be directly extracted, as it is built upon a layer of guarded type theory.
One potential approach would be to erase the guarded parts from the formulation of \gitrees and obtain that way a representation of \gitrees in a functional language like OCaml or Haskell, which already supports mixed-variance datatypes.
Then, the extraction can be set up in such a way as to use this representation.
We have not researched this direction and leave it to future work.

Finally, an important difference between our work and that on ITrees, is that the ITrees development relies heavily on the weak bisimilarity theory of interaction trees, while we opt for developing separation logic and refinements instead.
There are several things that complicate the study of bisimilarity for guarded interaction trees.
Firstly, the higher-order nature of \gitrees suggests that we need to study a more involved notion of behavioral equivalence, like applicative bisimilarity.
Secondly, we believe that developing a theory of \emph{weak} bisimilarity in the context of Iris and guarded type theory is still an open question, complicated by issues with step-indices.
For these reasons we believe that developing weak applicative bisimularity for guarded interaction trees will require new techniques and we leave it for future work.

\paragraph{Differences with the standard Iris approach to verification.}
The standard Iris-based approach to separation logic \cite{DBLP:conf/popl/JungSSSTBD15,DBLP:conf/icfp/0002KBD16,DBLP:conf/esop/Krebbers0BJDB17} is based on operational semantics, and has been proven to scale well to complicated programming language features.
The main difference with our work, is that we are the first to build an Iris-based separation logic over denotational semantics, in a way that is tightly integrated with the existing Iris ecosystem.
In particular, we rely on the Iris ecosystem for various data structures, resource algebras, base logic (but not the program logic), and the Iris Proof Mode.
As such, in terms of reasoning about specific concrete programs, the \gitrees-based approach is not that different from what a normal Iris user expects, with the added advantage of using equational reasoning for many computation steps that usually requires
some form of symbolic execution.

The main advantage of our approach comes into play when we want to
consider new models of programming languages, or reasoning about
programs with combinations of effects (as in \Cref{{sec:subeffects}}), or
reasoning about interoperability (as in \Cref{{sec:interop}}).

\paragraph{Domain theory and guarded type theory.}
Guarded type theory \cite{DBLP:journals/corr/BizjakGCMB16,DBLP:conf/lics/BirkedalM13} has been studied as a setting for domain theory before \cite{DBLP:journals/corr/abs-1208-3596,PaviottiEtAl:2015,MogelbergPaviotti:2019,MogelbergVezzosi:2021}, but previous works mostly focused on specific typed models and was not formalized.
In contrast, here we work with guarded interaction trees as a ``universal domain'', similar to domain theoretic models of untyped $\lambda$-calculus.

The previous work used (dependent) guarded type theory not just for modeling, but also for reasoning about programs.
This required a more complicated type theory and precluded the work from being formalized in a traditional proof assistant like Coq.
By contrast, our reasoning is done in the logic \emph{over} a guarded type theory.
This is arguably simpler, and allowed us to make use of Iris and formalize all of our results in Coq.

\paragraph{Logical relations and language interoperability.}
Our case study on language interoperability in \Cref{sec:interop} is inspired by the seminal work of \citet{PattersonEtAl:2022}.
We believe that the approach we take in our work is more modular.
Firstly, our approach here is less syntactic, as we use the domain theory of guarded interaction trees as the common setting for
interpreting different languages with different effects, and we do not need to come up with a target language for each pair of source languages for which we wish to set up interoperability.
Secondly, the models that we construct for individual languages are ``local'', and that is exactly what allows for true reuse of proofs and for constructing a common model for the combined language from the individual models.
This opens up the possibility of a model for a single language to be reused for different cross-language interactions.
In contrast, the type safety of individual source languages in \cite{PattersonEtAl:2022} requires to have a model for the combined language in advance.
And finally, the treatment of effects and step-indices is more abstract and high-level in our work, since we construct our models using separation logic.

\paragraph{(Guarded) Interaction Trees and Algebraic Effects and Handlers.}
The treatment of effects in interaction trees is reminiscent of effects in programming languages based on algebraic effects and handlers \cite{PlotkinPretnar:2013,BauerPretnar:2015}.
Algebraic effects and handlers have been extensively studied in various contexts, including separation logic \cite{deVilhenaPottier:2021}, and both higher-order effects \cite{WuEtAl:2014,vandenBergEtAl:2021,BachPoulsenvanderRest:2023} and reasoning about combinations of effects \cite{DBLP:conf/lics/JohannSV10,YangWu:2021} have been investigated.
Despite the aesthetic and moral similarities to (guarded) interaction trees, there are some substantial differences between the two approaches.
Under algebraic effects and handlers, both the representation and reification of effects is done \emph{inside} the programming language.
As such, a particular theory and implementation of algebraic effects is always tied to a specific programming language.
Whereas in the interaction trees-based approach, the effects are handled in the ambient type theory, outside the type of the (guarded) interaction trees itself.
See also the discussion in \cite[Section 8.2]{XiaEtAl:2019}.

To our knowledge, the two approaches have not been formally compared.
It would be interesting to consider a denotational model of a programming language with algebraic effects inside guarded interaction trees, and to see what kind of properties can be proved using such a model.

Additionally, such a comparison might help us understand the exact class of effects that can be represented with the \gitrees-based approach.
As it currently stands, our approach to representing effects is ``open-ended'', in the sense that we can consider different classes of effects by varying the reification procedure.
Of course, different classes of effects allow for different reasoning principles.
For example, as we mentioned in the end of \Cref{sec:reductions}, we consider context-independent effects, which preclude us from modeling call/cc, but allows us to use the bind rule for the weakest precondition calculus.
We leave a formal comparison with algebraic effects and further investigations in that area to future work.

\begin{acks}                            This work was supported in part by a Villum Investigator grant (no.
  25804), Center for Basic Research in Program Verification (CPV), from the VILLUM Foundation.
  The authors are grateful to the anonymous reviewers for their comments and suggestions.
\end{acks}

\section*{Data Availability Statement}
The Coq formalization corresponding to this article is available as a Git repository at \url{https://github.com/logsem/gitrees/tree/popl24} (tag \texttt{popl24}), or under a permanent DOI at \url{https://doi.org/10.5281/zenodo.10124427}.

\end{document}